\newcommand{\case}[1]{
        {\vspace{1em}\noindent{\bf Case #1:}}}
\providecommand{\tup}[1]{%
    \relax\ifmmode
      \langle #1 \rangle%
    \else
        $\langle$#1$\rangle$%
    \fi
}
\newcommand{\act}[1]{%
    \relax\ifmmode
        \mathord{\mathcode`\-="702D\sf #1\mathcode`\-="2200}%
    \else
        $\mathord{\mathcode`\-="702D\sf #1\mathcode`\-="2200}$%
    \fi
}
\newcommand{\remove}[1]{}
\def\mainlistofsymbols{
  \normalsize
  \vspace*{1.5 em}
  \@starttoc{los}
}
\def\partonelistofsymbols{
  \normalsize
  \vspace*{1.5 em}
  \@starttoc{p1los}
}
\def\parttwolistofsymbols{
  \normalsize
  \vspace*{1.5 em}
  \@starttoc{p2los}
}
\def\l@symbol#1#2{\addpenalty{-\@highpenalty} \vskip 4pt plus 2pt
{\@dottedtocline{0}{0em}{8em}{#1}{#2}}}
\newcommand{\newhiddensym}[2]{%
}
\newcommand{\stateSet}[1]{states(#1)}
\newcommand{\transSet}[1]{trans(#1)}
\newcommand{\algIOA}[2]{\ifmmode{\text{#1}_{#2}}\else{$\text{#1}_{#2}$}\fi}
\newcommand{\prIOA}[1]{A_{#1}}
\newcommand{\EX}{\ifmmode{\xi}\else{$\xi$}\fi}
\newcommand{\EXF}{\ifmmode{\phi}\else{$\phi$}\fi}
\newcommand{\extract}[2]{#1|#2}
\newcommand{\acts}{\alpha}
\newcommand{\hist}[1]{H_{#1}}
\newcommand{\st}{\sigma}
\newcommand{\objTSet}[1]{\Delta_{#1}}
\newcommand{\inter}[1]{
	\ifmmode{\left(\bigcap_{\mathcal{Q}\in#1}\mathcal{Q}\right)}
	\else{$\left(\bigcap_{\mathcal{Q}\in#1}\mathcal{Q}\right)$}
	\fi
}
\newcommand{\idSet}{\mathcal{I}}
\newcommand{\wSet}{\mathcal{W}}
\newcommand{\srvSet}{\mathcal{S}}
\newcommand{\objSet}{\mathcal{O}}
\newcommand{\verSet}{\mathit{Versions}}
\newcommand{\op}{\pi}
\newcommand{\trop}[2]{\op(#1)[#2]}
\newcommand{\rd}{\rho}
\newcommand{\wrt}{\omega}
\mathchardef\mhyphen="2D
\newcommand{\trw}[2]{cvr\mhyphen\wrt(#1)[#2]}
\newcommand{\trrd}[2]{cvr\mhyphen\rd(#1)[#2]}
\newcommand{\rrw}[2]{rr\mhyphen\wrt(#1)[#2]}
\newcommand{\pr}{p}
\newcommand{\wrtr}{w}
\newcommand{\bef}{\rightarrow}
\newcommand{\vid}[1]{\ifmmode{\nu_{#1}}\else{$\nu_{#1}$}\fi}
\newcommand{\seen}{\ifmmode{seen}\else{$seen$}\fi}
\newcommand{\ABD}{{\sc ABD}}
\newcommand{\mwABD}{{\sc mw}\ABD{}}
\newcommand{\valSet}{V}
\newcommand{\val}[1]{val_{#1}}
\newcommand{\tg}[1]{tag_{#1}}
\newcommand{\maxts}[1]{\ifmmode{maxTS_{#1}}\else{$maxTS_{#1}$}\fi}
\newcommand{\maxtag}[1]{\ifmmode{maxTag_{#1}}\else{$maxTag_{#1}$}\fi}
\newcommand{\maxpair}[1]{\ifmmode{maxMPair_{#1}}\else{$maxMPair_{#1}$}\fi}
\newcommand{\mintag}[1]{\ifmmode{minTag_{#1}}\else{$minTag_{#1}$}\fi}
\newcommand{\maxps}{\ifmmode{maxPS}\else{$maxPS$}\fi}
\newcommand{\conftg}[1]{\ifmmode{confirmed_{#1}}\else{$confirmed_{#1}$}\fi}
\newcommand{\maxconftag}{\ifmmode{\ms{maxCT}}\else{$maxCT$}\fi}
\newtheorem{theorem}{Theorem}
\newtheorem{lemma}[theorem]{Lemma}
\newtheorem{definition}[theorem]{Definition}
\newcommand{\prf}[1]{{}}
\newenvironment{proof}{\noindent{\bf Proof.}}{\hfill$\Box$\FF}
\newtheorem{Def}{Definition}[section]
\newcommand{\sacode}[5]
{ \vspace{.06in} \hrule \vspace{.06in} 
 \noindent {\bf #1}: \\
 \footnotesize \noindent {\bf Signature:}\B \nobreak
 \normalsize \begin{quote} \nobreak #2 \end{quote}
 \footnotesize \noindent {\bf States:}\B \nobreak
 \begin{quote} \nobreak #3 \end{quote}
 \noindent {\bf Transitions:} \nobreak
 \vspace{-.2in} \nobreak
 \normalsize #4
 \vspace{-.06in} \hrule \vspace{.06in} 
}
\newcommand{\seq}[1]{%
    \relax\ifmmode
      \langle \! \langle #1 \rangle \! \rangle%
    \else
        $\langle \! \langle$ #1 $\rangle \! \rangle$%
    \fi
}
\newcommand{\B}{\vspace*{-\smallskipamount}}
\newcommand{\FF}{\vspace*{\medskipamount}}
\newcommand{\Nat}{{\N}}
\newcommand{\N}{\mathbb N}
\newcommand{\ms}[1]{%
    \relax\ifmmode
        \mathord{\mathcode`\-="702D\it #1\mathcode`\-="2200}%
    \else
        {\it #1}%
    \fi
}
\newcommand{\lit}[1]{%
    \relax\ifmmode
        \mathord{\mathcode`\-="702D\sf #1\mathcode`\-="2200}%
    \else
        {\it #1}%
    \fi
}
\newcommand{\XDK}[1]{}
\newcommand{\uselater}[1]{} 
\renewcommand{\setminus}{-}
\newcommand{\cg}[1]{#1}
\newcommand{\nn}[1]{#1}
\newcommand{\af}[1]{#1}
\begin{document}


\title{
CoVer-ability: \textbf{Co}nsistent \textbf{Ver}sioning for Concurrent Objects 
\footnote{Supported in part by FP7-PEOPLE-2013-IEF grant ATOMICDFS No:629088, Ministerio de Economia y Competitividad grant TEC2014- 55713-R, Regional Government of Madrid (CM) grant Cloud4BigData (S2013/ICE-2894, co- funded by FSE \& FEDER), NSF of China grant 61520106005, and European Commission H2020 grants ReCred and NOTRE.}
}
%

\author{
Nicolas Nicolaou$^{~\dag}$ \and Antonio Fern\'andez Anta
\thanks{IMDEA Networks Institute, 
	Madrid, Spain, 
	\texttt{nicolas.nicolaou@imdea.org},\texttt{antonio.fernandez@imdea.org}}
\and Chryssis Georgiou
\thanks{Dept. of Computer Science, University of Cyprus,
  Nicosia, Cyprus, 
 \texttt{chryssis@cs.ucy.ac.cy}
 }
}


\maketitle 

\begin{abstract}


An \textit{object type} characterizes the domain space and the operations
that can be invoked on an object of that type. In this paper we introduce a new
property for concurrent objects, we call \emph{coverability}, that aims to 
provide precise guarantees on the consistent evolution of an object.   
%
This new property is suitable for a variety of distributed objects
including \emph{concurrent file objects} that demand operations 
to manipulate the latest version of the object. 
We propose two levels of coverability: (i) strong coverability and 
(ii) weak coverability. Strong coverability requires that only
a \emph{single operation} can modify \af{the latest} version of the object, i.e. \textit{``covers"}
the latest version with a new version, imposing 
a total order on object modifications.  
Weak coverability relaxes the strong 
requirements of strong coverability and allows \emph{multiple operations} 
to modify the same version of an object, where each modification leads to a different version. 
Weak coverability preserves consistent evolution of the object, by demanding any subsequent operation
to only modify one of the newly introduced versions. \cg{Coverability combined with atomic guarantees 
yield to \emph{coverable atomic read/write registers}.}
We also show that 
strongly coverable atomic registers are equivalent in power to consensus. 
Thus, we focus on {\em weakly} coverable registers, and we demonstrate 
their importance by showing that they cannot be implemented using similar types of registers,
like ranked-registers.
Furthermore we show that  
weakly coverable registers may be used to 
implement basic (weak) read-modify-write and file objects. 
Finally, we 
implement weakly coverable registers by modifying an existing MWMR atomic register
implementation.\vspace{2em}

\noindent{\bf Submission Type:} Regular paper.

\end{abstract}

%
%
%

\thispagestyle{empty}
\setcounter{page}{0}
\newpage

\section{Introduction}
\label{sec:intro}

\noindent{\textbf{Motivation and Prior Work.}} A concurrent system allows multiple processes to interact with a single object 
at the same time. A long string of research work \cite{Lamport79, Lamport86, HW90, GD05, AH07}  
has been dedicated to explain the behavior of concurrent objects,
defining the order and the outcomes of operations when those are invoked concurrently on the object. Lamport in \cite{Lamport79, Lamport86} presented three different incremental semantics, \emph{safety}, \emph{regularity}, and \emph{atomicity} that characterize the behavior of read/write objects (registers) when those are modified or read concurrently by multiple processes. The strongest, and most difficult to provide 
in a distributed system, is \emph{atomicity} which provides the illusion
that the register is accessed sequentially. Herlihy and Wing presented \emph{linearizability} in \cite{HW90}, an extension
of atomicity to general concurrent objects. More recent developments 
have proposed abortable operations in the event of concurrency \cite{AH07}, and ranked registers \cite{GD05} that allow operations to 
abort in case a higher ``ranked'' operation  was previously
or concurrently executed in the system. 

With the advent of cloud computing, emerging families of more 
complex concurrent objects, like files, distributed databases, and bulleting boards, demand precise
guarantees on the consistent evolution of the object.
\nn{For example, in \emph{concurrent file objects} one would expect 
that if a write operation $\wrt_2$ is invoked after a write operation $\wrt_1$ 
is completed, then $\wrt_2$ modifies either the version of the file written by $\wrt_1$ or a 
version of the file newer than the one written by $\wrt_1$.}
\emph{So is it possible to provide such guarantees using simpler 
objects as building blocks?}

In existing atomic read/write 
distributed shared register implementations,
write operations 
are usually \nn{allowed to modify the value of the register,
even when they are unaware of the value written by the latest preceding write operation.} 
In systems that assume 
a single writer \cite{ABD96, CDGL04, GNS09, GNS08}, the problem may be diminished by having 
the sole writer compute the next value to be written 
in relation to the previous values it wrote. 
The problem becomes more apparent when multiple writers may alter the value of a single register concurrently \cite{LS97, EGMNS09}.
In such cases, atomic read/write register implementations 
appear unsuitable to directly implement objects that 
demand evolution guarantees.  
%
%
Closer candidates to build such objects are the bounded \cite{BDFG03}
and ranked \cite{GD05} registers. These objects take into account the 
``rank" or sequence number of previous operations to decide whether
to allow a read/write operation to commit or abort. 
\nn{These approaches do not prevent, however, the use of an arbitrarily 
higher rank, and thus an arbitrarily higher version, than the previous operations. 
This affects the consistent evolution of the object, as intermediate versions 
of the object maybe ignored.}\vspace{.4em} 

\noindent{\bf Contributions.}
In this paper we propose a formalism to extend a concurrent object in such a way that the evolution of its state satisfies certain guarantees. To this end, we
extend an object state with a \emph{version,} and introduce the concept of \emph{coverability,} that defines how the versions of an object can evolve (Section~\ref{sec:atomic}). 

\nn{In particular, we first introduce 
a new class of a concurrent read/write register type, which we call \emph{versioned register}.
A concurrent register is of a \textit{versioned} type, if the state of the register and any operation (read or write) 
that attempts to modify the state of the register, are associated with a \emph{version}.
An operation may modify the state and the version of the 
register, or it may just retrieve its state-version pair.}

\textit{Coverability} defines the exact guarantees that a versioned register 
provides when it is accessed concurrently by multiple processes with respect to the evolution of its versions.
\nn{We define two levels of coverability: \emph{strong} and \emph{weak} coverability. 
Strong coverability ensures that only a \emph{single operation} may change a given version (and thus the state) of the register, 
resulting in a lineal evolution of the versions (and the states) of the register. 
Weak coverability relaxes this rule and allows \textit{multiple operations} to change a version, 
generating in this way a \emph{tree} with possibly multiple version branches that can grow in parallel. 
This shares similarities with \emph{fork linearizability} presented in \cite{MS02}. However, in contrast to \cite{MS02},
weak coverability allows processes, that change the same version of the object, to see the changes of each other in subsequent
operations. In particular, by weak coverability, when all the operations that extend a particular version of the object terminate,
there is one version $ver$ that was generated by one of those operations, which is the ancestor of 
any version extended by any subsequent operation. 
Thus, only a single branch in the tree is extended and  
that branch denotes the evolution of the register. 
}
\cg{Combining strong/weak coverability with atomic guarantees we obtain {\em strongly/weakly coverable atomic read/write registers}. } 
While strongly coverable atomic registers are very desirable objects, we show that they are in fact very strong. In particular, 
we argue that these object types are as powerful as consensus objects (the details are given in Appendix \ref{appx:consensus}). 
Hence, it is challenging to implement these objects in some distributed systems, and impossible in an asynchronous system prone to failures (from the FLP 
result~\cite{FLP85}).

The good news is that even {\em weakly} coverable atomic registers have very interesting features. 
On the one hand, they can be implemented in message passing asynchronous distributed systems where processes can fail. 
To show this, we describe how algorithms that implement atomic R/W registers can easily be modified to implement these objects (Section \ref{sec:algorithms}). 
On the other hand, we show that weakly coverable atomic registers cannot be implemented using other previously defined register types such as 
ranked registers (Section~\ref{ssec:tr-vs-rr}). 

One of the main motivation for introducing coverable registers are {\em file objects}, which can be seen as a special case 
of register objects in which each new value is a revision of the previous value. In essence, each modification of a file can be seen as an atomic read-modify-write (RMW) operation. Strongly coverable atomic registers provide the desired strong guarantees for files, since they are powerful enough to support atomic RMW operations. 
However, we show that even {\em weakly} coverable atomic registers can be used to provide interesting weak RMW guarantees that can be used to implement files with a good level of consistency 
 (Section~\ref{sec:applications}).

\remove{
In particular we introduce 
a new atomic operation we call \emph{update}. An update replaces the write operation
and acts according to the status of the value of the atomic object as this is
witnessed during a query round. If the operation that performs the update 
knows about the latest value of the object then during a second round the 
update propagates the new value to the object (acts as a write). In case 
however the update is not aware of the latest value then it just propagates 
the latest value discovered during its query phase (acts as a read). In order 
to write its changes a process needs to obtain the latest value apply its local 
changes on this value and then invoke an update operation. Observe that this approach
does not prevent the hidden writes but it enhances the service with a mechanism 
that proactively informs the writer of a potentially hidden write. This scheme however
increases the chance where a writer will collide with writes with higher priorities 
and thus failing to complete a write. This is what we call \emph{write starvation}. 
To prevent write starvation we incorporate a new way to tag values that rotates 
the priorities of the writers depending on the number of updates they successfully 
completed. 
}

%

\section{Model}
\label{sec:model}

We consider a distributed system composed of $n$ \emph{asynchronous} processes,
with identifiers from a set $\idSet=\{p_1,\ldots,p_n\}$, 
each of which represents a sequential thread of control. Processes
may interact with a set of shared objects $\objSet$. 
Each object in $\objSet$ represents a \emph{data structure}
shared among the processes, and has a \emph{type} which defines the
possible set of \emph{object states} and the set of \emph{operations} 
that provide the means to manipulate the object. 
A subset of processes may fail by \emph{crashing}. 

Processes can be modeled in terms of I/O Automata \cite{LT89}.
An automaton $A$  (which combines the automata $A_i$ for each process $\pr_i\in\idSet$) is defined over a set of \emph{states}
and a set of \emph{actions}. 
An \emph{execution} $\EX$ of $A$ is an alternating sequence
of \emph{states} and \emph{actions} of $A$.
An \emph{execution fragment} is a finite prefix of an execution.
We say that an execution fragment
$\EX'$ \emph{extends} an execution fragment $\EX$,
if $\EX$ is a prefix of $\EX'$.
%
A \emph{history} of an automaton $A$, denoted by $\hist{\EX}$, is the subsequence 
of actions occurring in some execution fragment $\EX$. 
%
\remove{
An automaton $A$ can be constructed from the composition of 
compatible automata $\prIOA{i}$, one for each process $\pr_i\in\idSet$.
Each state $\st\in\stateSet{A}$ is a vector of the states of the 
component automata $\prIOA{i}$. Let $\st[i]$ denote the state of automaton
$\prIOA{i}$ in $\st$. For a step $\tup{\st,\acts,\st'}\in\transSet{A}$, the 
state of an automaton $\prIOA{i}$ moves from $\st[i]$ to $\st'[i]$ if 
$\tup{\st[i],\acts,\st'[i]}\in\transSet{\prIOA{i}}$; otherwise $\st[i]=\st'[i]$.
Therefore, we can extract the execution of an automaton $\prIOA{i}$ from
an execution $\EX$ of $A$, denoted by $\extract{\EX}{\prIOA{i}}$, by: (i) deleting
any steps $\tup{\st_k,\acts_k,\st_{k+1}}$ s.t. $\tup{\st_k[i],\acts_k,\st_{k+1}[i]}\notin\transSet{\prIOA{i}}$,
and (ii) by replacing the remaining $\st_z$, for $z\neq k$, with $\st_z[i]$ in $\EX$.
Similarly if $\hist{\EX}$ is the history of events for an execution $\EX$ of $A$,
then $\extract{\hist{\EX}}{\prIOA{i}}$ (or $\hist{\EX}^i$ for short), is the 
history of actions of $\prIOA{i}$ in $\EX$, and contains the subsequence 
of actions occurring in $\extract{\EX}{\prIOA{i}}$. 
}
An automaton $\prIOA{}$ 
\emph{invokes} an operation when an \emph{invocation
action} occurs in an execution $\EX$, and receives a \emph{response}
to an action when a \emph{response action} occurs.  
An operation $\op$ 
is \emph{complete} in an execution $\EX$, if $\hist{\EX}$ contains 
both the invocation and the matching response actions for $\op$; otherwise 
$\op$ is \emph{incomplete}. 
A history $\hist{\EX}$ of the automaton $A_i$ of a process $\pr_i$ is \emph{well formed} if it begins with an invocation 
event 
and alternates between matching invocation and response events. (This demonstrates the assumption that each process is a single thread of control.)
Each history $\hist{\EX}$ includes a \nn{precedence relation} $\bef_{\hist{\EX}}$ on its operations.
An operation $\op_1$ \emph{precedes} an operation $\op_2$ (or 
$\op_2$ \emph{succeeds} $\op_1$) in $\hist{\EX}$ if the response of 
$\op_1$ appears before the invocation of $\op_2$ in $\hist{\EX}$. This is 
denoted by $\op_1\bef_{\hist{\EX}}\op_2$. If $\op_1\not\bef_{\hist{\EX}}\op_2$ and
$\op_2\not\bef_{\hist{\EX}}\op_1$ in $\hist{\EX}$, then $\op_1$ and 
$\op_2$ are \emph{concurrent}. 
%
%
A process $\pr_i$ \emph{crashes} in an execution $\EX$ if the event $\act{fail}_{\pr_i}$ 
appears and is the last action of $\pr_i$ in $\hist{\EX}$; otherwise $\pr_i$ is \emph{correct}.



\section{Coverable Atomic Read/Write Registers}
\label{sec:atomic}


In this section we define a new type of R/W register, the \emph{versioned register}.
Next we provide new consistency properties for concurrent versioned registers 
called \emph{(strong/weak) coverability}.  We show how coverability can be combined 
with atomic guarantees to yield a coverable atomic register.\vspace{-0.5em}
%


\paragraph{Versioned register.} 
Let $\verSet$ be a \textit{totally ordered} set of \textit{versions}.
A \emph{versioned register} is a type of read/write register where each value written 
is assigned with a version from the set $\verSet$. 
Moreover, each write operation $\op$ that attempts to change the value of the 
register is also associated with a version, say $ver_\op$, denoting that it intends
to overwrite the value of the register associated with the version $ver_\op$. 
More precisely, an implementation of a R/W register 
offers two operations: \emph{read} and \emph{write}. 
A process $\pr_i\in\idSet$ \emph{invokes} a \emph{write} (resp. \emph{read}) operation when it issues 
a $\act{write}(\val{})_{\pr_i}$ (resp. $\act{read}_{\pr_i}$) request. 
The \emph{versioned} variant of a R/W register also offers two operations:
 (i) $\act{cvr-write}(\val{}, ver)_{\pr_i}$, and (ii) $\act{cvr-read}()_{\pr_i}$. 
A process $\pr_i$ invokes a $\act{cvr-write}(\val{}, ver)_{\pr_i}$ operation 
when it performs a write operation that attempts to change the value 
of the object. 
The operation returns
the value of the object and its associated version, along with a flag informing
whether the operation has successfully changed the value of the object or failed.
We say that a write is \emph{successful} if it changes the value of the 
register; otherwise the write is \emph{unsuccessful}.
The read operation $\act{cvr-read}()_{\pr_i}$ involves a request to 
retrieve the value of the object. The response of this operation is the 
value of the register together with the version of the object that this value is 
associated with. 

Read operations do not incur any change on the value 
of the register, whereas write operations attempt to modify the value of the register.
More formally, let $\objTSet{T}$ be the set of transitions for the versioned register. 
Then, each $\delta\in\objTSet{T}$ is a tuple 
 $\tup{\st, \op, \pr_i, \st', res}$, denoting
 that the register moves from state $\st$ to state $\st'$, 
  and responds with $res$, as a result of operation  
 $\op$ invoked by process $\pr_i\in\idSet$.
 The state of a versioned register is essentially its \emph{value}, drawn from a set $\valSet$,
 and its \emph{version}, drawn from the set $\verSet$.
 We assume that $\objTSet{T}$ is \emph{total}, that is,
 for every $\op\in\{\act{cvr-write}(\val{}, ver)_{\pr_i}, \act{cvr-read}()_{\pr_i}\}$, $\pr_i\in\idSet$, and 
 $\st = (\val{},ver) \in\valSet \times \verSet$, there exists $\st' = (\val',ver') \in\valSet \times \verSet$
 and $res$ such that $\tup{\st, \op, \pr_i, \st', res}\in\objTSet{T}$. 
As such, the transitions of the versioned register type 
can be written as follows:\vspace{-.5em} 
\begin{enumerate}[leftmargin=5mm]\itemsep2pt		
		\item $\tup{(\val{},ver), \act{cvr-write}(\val{}',ver_\wrt), \pr_i, (\val{}',ver'), (\val{}',ver', chg)}$, for $ver_\wrt = ver$,
		\item $\tup{(\val{},ver), \act{cvr-write}(\val{}',ver_\wrt), \pr_i, (\val{},ver), (\val{},ver, unchg)}$, for $ver_\wrt \neq ver$
		\item $\tup{(\val{},ver), \act{cvr-read}(), \pr_i, (\val{},ver), (\val{},ver)}$.
	\end{enumerate}
Notice that write operations may or may not modify the value/version 
of the register.
In the transitions above, $ver_\wrt$ denotes the version of the register which the write operation tries
to modify. 
The relationship of $ver$ with $ver'$ may vary depending on the application that uses this register (but seems
natural to assume that $ver' > ver$).
A read operation 
does not make any changes on the value or the version of the object. 
To simplify notation, 
in the rest of the paper 
we avoid any reference 
to the value of the register. Additionally we only use the flag when its value is $unchg$.
Thus, $\act{cvr-write}(v, ver)(v, ver', chg)_{\pr_i}$ is denoted as $\trw{ver}{ver'}_{\pr_i}$,
and $\act{cvr-write}(v, ver)(v', ver', unchg)_{\pr_i}$ is denoted as $\trw{ver}{ver', unchg}_{\pr_i}$. 

%
We say that, a write operation
 \emph{revises} a version $ver$ of 
the versioned register to a version $ver'$ (or \emph{produces} $ver'$) in an execution $\EX$, 
if $\trw{ver}{ver'}_{\pr_i}$ completes in $\hist{\EX}$.
Let the set of \emph{successful write} operations on a 
history $\hist{\EX}$ 
be defined as: 
\[
\wSet_{\EX,succ} = \{\op: \op=\trw{ver}{ver'}_{\pr_i}\text{ completes in }   \hist{\EX}\}
\] 
The set now of produced versions 
in the history $\hist{\EX}$ is defined by:
\[
\verSet_{\EX} = \{ver_i : \trw{ver}{ver_i}_{\pr_i}\in \wSet_{\EX, succ}\}\cup\{ver_0\}
\]
where $ver_0$ is the initial version of the object.
Observe that the elements of $\verSet_{\EX}$ are totally ordered.
In the rest of the text we use `$*$' in the place of some parameter
to denote that any legal value for that parameter can be used.
Now we present the {\em validity} property which defines explicitly the 
set of executions that are considered to be valid executions.

\begin{definition} [Validity]
\label{def:validity} 
An execution $\EX$ (resp. its history $\hist{\EX}$) is a \emph{valid execution} (resp. history)
on a versioned object, if  $\wSet_{\EX}$ 
and for any $\pr_i,\pr_j\in\idSet$:\vspace{-0.2em}
\begin{itemize}[leftmargin=5mm]\itemsep2pt
	\item $\forall \trw{ver}{ver'}_{\pr_i} \in \wSet_{\EX,succ}, ver < ver'$,
	\item for any operations $\trw{*}{ver'}_{\pr_i}$ and $\trw{*}{ver''}_{\pr_j}$ in $\wSet_{\EX,succ}$, $ver'\neq ver''$,  and
	\item  for each $ver_k\in Versions_{\EX}$ there is a sequence of versions $ver_0, ver_1,\ldots, ver_k$, such that 
	$\trw{ver_i}{ver_{i+1}}$ $\in\wSet_{\EX,succ}$, for $0\leq i<k$.
%
%
\end{itemize}
\end{definition}
Validity makes it clear that an operation changes the 
version of the object to a larger version, according to the total ordering of the versions.
Also validity specifies that versions are \emph{unique}, 
i.e. no two operations associate two states with the same version. This can be easily 
achieved by, for example, recording a counter and the id of the invoking 
process in the version of the object. Finally, validity requires that each 
version we reach in an execution is \textit{derived} (through a chain of operations) 
from the initial version of the register $ver_0$.
From this point onward we fix $\EX$ to be a valid 
execution and $\hist{\EX}$ to be its valid history.
\vspace{-.5em}

\paragraph{Coverability.} 
\sloppypar{%
We can now 
define the \emph{strong} and \emph{weak coverability} properties 
over a valid execution $\EX$ of versioned registers with respect to some 
total order $>_\EX$ on the operations of $\EX$. 
} 
%

%
%
\vspace{-.2em}

\begin{definition}[Strong Coverability]
\label{def:strong}
Let $ver_0 < ver_1 < \ldots < ver_{|\wSet_{\EX,succ}|}$ be the versions in $\verSet_{\EX}$.

A valid execution $\EX$ is \textbf{strongly coverable} with respect to a total order $<_{\EX}$ 
on operations in $\wSet_{\EX,succ}$ if:
\begin{itemize}[leftmargin=5mm]\itemsep2pt
	\item $\trw{ver_{i-1}}{ver_i}\in\wSet_{\EX,succ}$, for $1\leq i \leq |\wSet_{\EX,succ}|$, 
	\item  
	$\trw{ver_{i-1}}{ver_i}<_{\EX}\trw{ver_i}{ver_{i+1}}$, for $1\leq i < |\wSet_{\EX,succ}|$ , and 
	\item  
	if $\op_1, \op_2\in\wSet_{\EX,succ}$, and $\op_1\bef_{\hist{\EX}}\op_2$ then $\op_1<_{\EX}\op_2$. 
\end{itemize}\vspace{-.3em} 
\end{definition}
By Definition \ref{def:strong}, all successful write operations are totally ordered 
with respect to the versions they modify. Notice than only a single write 
operation modifies each version $ver_{i-1}$ to the next version $ver_i$.
Thus, strong coverability defines an object type which is 
difficult to provide in an asynchronous distributed setting. 
In fact it can be shown that strongly coverable
registers can be used to solve consensus among asynchronous 
fail-prone processes (see Appendix \ref{appx:consensus}). %
However, as shown by Fischer, Lynch and Paterson \cite{FLP85}, 
solving consensus in such a system is impossible in the existence of a single crash failure, unless
some powerful object is used.
Hence the interest in defining a {\em weaker} version of coverability.

\begin{definition}[Weak Coverability]
\label{def:weak}
A valid execution $\EX$ is \textbf{weakly coverable} with respect to a total order $<_{\EX}$ 
on operations in $\wSet_{\EX,succ}$ if:
\begin{itemize}[leftmargin=5mm]\itemsep2pt
	\item ({\bf Consolidation}) If $\op_1=\trw{*}{ver_i}, \op_2=\trw{ver_j}{*} \in \wSet_{\EX,succ}$, 
	and $\op_1\bef_{\hist{\EX}} \op_2$ in $\hist{\EX}$, then $ver_i \leq ver_j$ and $\op_1<_{\EX}\op_2$.
	\item ({\bf Continuity})  if $\op_2=\trw{ver}{ver_i}\in\wSet_{\EX, succ}$, then 
	there exists $\op_1\in\wSet_{\EX,succ}$ s.t. 
	$\op_1=\trw{*}{ver}$ and $\op_1<_\EX \op_2$, or $ver=ver_0$.
	\item ({\bf Evolution})
	let $ver, ver', ver''\in Versions_{\EX}$. If there are sequences of versions $ver'_1, ver'_2,\ldots, ver'_{k}$
	and $ver''_1, ver''_2,\ldots, ver''_{\ell}$, where $ver=ver'_1=ver''_1$, $ver'_{k}=ver'$, and $ver''_{\ell}=ver''$
	such that
	$\trw{ver'_i}{ver'_{i+1}}$ $\in\wSet_{\EX,succ}$, for $1\leq i<k$, and $\trw{ver''_i}{ver''_{i+1}}$ $\in\wSet_{\EX,succ}$, for $1\leq i<\ell$,
	and $k < \ell$, then $ver' < ver''$.
\end{itemize}
\end{definition}
By Definition \ref{def:weak}, weak coverability allows multiple write operations to revise the same version $ver_i$
of the register, each to a \emph{unique} version $ver_j$. 
%
Given the set of successful operations $\wSet_{\EX,succ}$
and the set of versions $\verSet_{\EX}$, Definitions \ref{def:validity} and \ref{def:weak} define a connected rooted tree $\mathcal{T}$ s.t.:
\begin{itemize}[leftmargin=5mm]\itemsep2pt
	\item The set of nodes of $\mathcal{T}$ is $\verSet_{\EX}$,
	\item \af{The initial version $ver_0$ of the object} is the root of $\mathcal{T}$,
	\item A node $ver_i$ is the parent of a node $ver_j$ in $\mathcal{T}$ iff $\exists\trop{ver_i}{ver_j}\in\wSet_{\EX,succ}$, 
	\item If $\op_1=\trw{*}{ver_i}\in \wSet_{\EX,succ}$, 
		s.t. $\op_1$ is not concurrent with any other operation, then 
		$\forall \op_2\in\wSet_{\EX,succ}$,
		s.t. $\op_1\bef_{\EX} \op_2$ and $\op_2=\trop{ver_z}{*}$, then $ver_i$ is an ancestor of $ver_z$ in $\mathcal{T}$,
		\af{or $ver_i=ver_z$ (by Consolidation, Continuity, and Validity)}
	\item if $ver_i$ is an ancestor of $ver_j$ in $\mathcal{T}$, then $\trw{*}{ver_i}<_\EX\trw{*}{ver_j}$ (by Continuity).
	\item if $ver_i$ is at level $k$ of $\mathcal{T}$ and $ver_j$ is at level $\ell$ of $\mathcal{T}$ s.t. $k<\ell$, then
	$ver_i<ver_j$ (by Evolution).
\end{itemize}
Observe that without the properties imposed by weak coverability, 
validity allows the creation of a tree of versions and 
does not prevent operations from being applied on an old version of the register. 
\emph{Continuity}, \emph{Consolidation}, and \emph{Evolution} explicitly specify the conditions that reduce the 
branching of the generated tree, and in the case of not concurrency lead the 
operations to a single path on this tree. 
%
\emph{Consolidation} specifies that write operations
may revise the register with a version larger than any version modified
 by a preceding write operation, and may lead to a version newer than
 any version introduced by a preceding write operation. 
\emph{Continuity} defines that a write operation may revise a version that was introduced
by a preceding write operation according to the given total order.
Finally, \emph{Evolution} limits the relative increment on the version of a register that can be
introduced by any operation.
Figure \ref{fig:tree} provides an illustration of a tree 
created from a coverable execution $\EX$. We box 
sample instances of the execution and we indicate the coverability properties they satisfy.
\vspace{-.5em}

\paragraph{Atomic coverability.} 
We now combine coverability with atomic guarantees to obtain coverable atomic read/write registers. 
A register is linearizable \cite{HW90}, or equivalently \emph{atomic} (as 
defined specifically for registers by \cite{Lynch1996, Lamport86}) if the following conditions are 
satisfied by any execution $\EX$ of an implementation of the object.\vspace{-.2em}  


\begin{definition}[Atomicity]
\label{def:atomic}
{\rm  \cite[Section 13.4]{Lynch1996} An execution $\EX$ of an automaton $A$ is \emph{atomic} if every 
\emph{read} and \emph{write} operation in $\EX$ is \emph{complete}
and there is a partial ordering $\prec_{\hist{\EX}}$ on all operations 
$\Pi$ in $\hist{\EX}$ such that: 
{\bf A1.} For any pair of operations $\op_1,\op_2\in\Pi$, 
if $\op_1\bef_{\hist{\EX}}\op_2$ then it cannot hold that $\op_2\prec_{\hist{\EX}}\op_1$, 
{\bf A2.} If $\op\in\Pi$ is a \emph{write} operation and $\op'$ any operation in $\Pi$,
then either $\op\prec_{\hist{\EX}}\op'$ or $\op'\prec_{\hist{\EX}}\op$, and 
{\bf A3.} If $v$ is the value returned by a \emph{read} $\rd$ then $v$ is the 
value written by the last preceding \emph{write} according to $\prec_{\hist{\EX}}$
(or the initial value $v_0$ if there is no such a write).}\vspace{-.2em}
\end{definition}

\remove{
In the case of a versioned R/W register not all write operations may modify the 
value of the register. Thus a \emph{write} (and the property {\bf A2}) refers to a 
$\trw{*}{*,chg}$ write operation that modifies the value (and the version) of the register. 
A \emph{read} (and {\bf A3}) refers to a $\act{cvr-read}$ or a $\trw{*}{*,unchg}$ operation
that does not modify the value (nor the version) of the register.
We say that, a write operation
 \emph{revises} a version $ver$ of 
the versioned register to a version $ver'$ (or \emph{produces} $ver'$) in an execution $\EX$, 
if $\trw{ver}{ver', chg}_{\pr_i}$ completes in $\hist{\EX}$.
%
}

\begin{figure}[!t]
	\begin{center}	
		\includegraphics[width=0.60\textwidth]{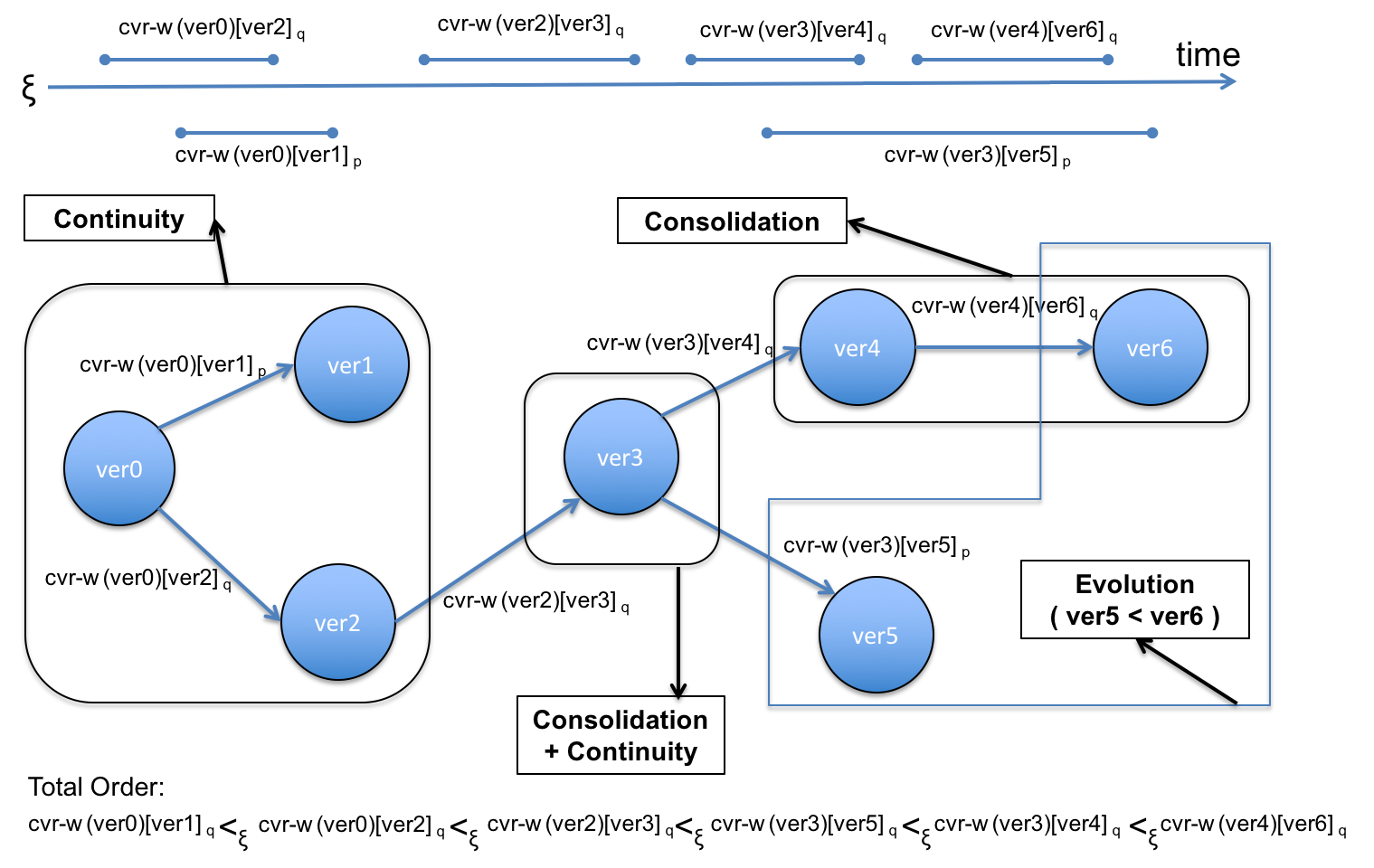}
	\end{center}\vspace{-1.5em}
	 \caption{Tree Illustration from Weak Coverable Execution
	 }
	 \label{fig:tree}\vspace{-1em}
\end{figure} 

In the context of versioned registers, in Definition \ref{def:atomic}, a \emph{write} refers to 
a successful write ($\trw{*}{*,chg}$) operation on the versioned register. Therefore, all 
the write operations in an execution $\EX$ are the ones that appear in $\wSet_{\EX,succ}$.
A \emph{read} refers to a versioned read ($\trrd{}{*}$) or an unsuccessful write ($\trw{*}{*,unchg}$) operation
that does not modify the value (nor the version) of the register.

\begin{definition}[Coverable atomic register]
 A versioned register is \textbf{(strongly/weakly) coverable} and \textbf{atomic}, referred as 
\emph{(strongly/weakly) coverable atomic register}, if any execution $\EX$ on the register satisfies:
(i) atomicity, and  (ii) strong/weak coverability (Definition \ref{def:strong}/\ref{def:weak}) with 
respect to the total order imposed by {\bf A2} on $\wSet_{\EX,succ}$. 
\end{definition}
Note that in a coverable atomic register, the ordering of read operations follows the ordering
from atomicity.
From this point onward, when clear from context, we refer to a coverable atomic register, as simply \emph{coverable register}. 
\vspace{-.2em}


\section{Weakly Coverable Atomic Registers vs Ranked Registers.}
\label{ssec:tr-vs-rr}
 
A type of registers that at first might resemble coverable registers are {\em ranked-registers}~\cite{GD05}.
As we show here, ranked-registers are weaker than {\em weakly} coverable registers. In particular, we show
that it is impossible to implement weakly coverable registers using ranked-registers; we begin by providing
a formal definition of ranked-registers.\vspace{-.2em}

\begin{definition}[Ranked-Registers~\cite{GD05}]
\label{def:rr}
Let $Ranks$ be 
a totally ordered set of ranks with $r_0$ the initial rank.
A ranked register is a MWMR shared object that offers the 
following operations: (i) $\act{rr-read}(r)$, with $r\in Ranks$ and returns $(r,v)\in Ranks\times Values$, and 
(ii) $\act{rr-write}(\tup{r,v})$, with 
$(r,v)\in Ranks\times Values$ and returns 
$commit$ or $abort$. A ranked register satisfies the 
following properties: 
	(i) {{\bf Safety.}} Every $\act{rr-read}$ operation returns a value and a rank that was written in some $\act{rr-write}$ invocation 
	or $(r_0, v_0)$. Additionally, if $W = \act{rr-write}(\tup{r_1, v})$ a write operation which commits 
	and $R = \act{rr-read}(r_2)$ such that $r_2 > r_1$, then $R$ returns $(r,v)$ where $r\geq r_1$.
	%
	(ii) {{\bf Non-Triviality.}} If a $\act{rr-write}$ operation $W$
	invoked with a rank $r_1$ aborts, then there exists an operation with rank $r_2 > r_1$ which returns before $W$ is invoked, or is concurrent with $W$ 
	%
	(iii) {{\bf Liveness.}} if an operation is invoked by a correct process then eventually it returns. \vspace{-.3em}

\end{definition}



We want to use rank-registers to implement the operations of
a weakly coverable register. As in Section \ref{sec:model}, 
we denote by $\trw{ver}{ver', flag}$ the 
coverable write operation that tries to revise version $ver$, 
and returns version $ver'$ with 
a $flag\in\{chg, unchg\}$. Similarly we denote by 
$\rrw{r}{r_h, res}$ a write operation on a ranked-register that 
uses rank $r$ and tries to modify the value of the register. The rank 
$r_h$ is the highest rank observed by an operation and $res\in\{abort, commit\}$.
%
In the following results we assume that a weakly coverable register is 
implemented using a set of ranked-registers. 
We begin with a lemma that shows that a coverable write operation 
revises the coverable register only if it invokes a write operation 
on some rank register and that write operation commits. {\em Omitted proofs
can be found in Appendix~\ref{appx:rr}.}

\begin{lemma}
\label{lem:commit}
Suppose there exists an algorithm $A$ that implements a weakly coverable register
using ranked-registers. In any execution $\EX$ of $A$, if
a process $\pr_i$ invokes a coverable write operation $\trw{ver}{ver',chg}_{\pr_i}$, 
then $\pr_i$ performs a write $\rrw{r}{r_h, commit}_{\pr_i,j}$ 
on some shared ranked-register $j$.
\end{lemma}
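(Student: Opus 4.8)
The plan is to argue by contradiction: assume a coverable write $\trw{ver}{ver',chg}_{\pr_i}$ completes but $\pr_i$ performs no committing ranked-register write during that operation, and derive a violation of weak coverability (or of the ranked-register semantics). First I would observe that the only way the coverable register can ``see'' a new value is through ranked-register operations: since the implementation $A$ is built solely on ranked-registers, the state visible to other processes evolves only via the shared ranked-registers. Hence, if $\pr_i$'s coverable write issues no committing $\act{rr-write}$, then every ranked-register write it issues (if any) aborts, and by the safety property of ranked-registers no subsequent $\act{rr-read}$ by any process is forced to return a value/rank originating from $\pr_i$'s operation.

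Next I would build the adversarial execution. Right after $\trw{ver}{ver',chg}_{\pr_i}$ completes, I schedule a second coverable write $\trw{ver'}{ver'',chg}_{\pr_j}$ by a different process $\pr_j$ that runs to completion in isolation (all of $\pr_i$'s messages/steps that could carry its value having already been accounted for, since the operation is complete and, in a sequential tail, no concurrency remains). The key point is that $\pr_j$'s operation must ``discover'' version $ver'$ in order to legitimately issue $\trw{ver'}{ver'',chg}_{\pr_j}$; it can only learn about $ver'$ by reading it from some ranked-register. But since $\pr_i$ committed nothing, ranked-register safety guarantees that $\pr_j$'s reads return only values written by committing ranked-writes or the initial pair — none of which carry $ver'$. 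So $\pr_j$ cannot produce a write whose source version is $ver'$. By \textbf{Continuity} of weak coverability, however, since $\trw{ver}{ver',chg}_{\pr_i}$ completed (and is not concurrent with anything in this tail execution), any subsequent successful coverable write that $\pr_j$ performs after it must revise $ver'$ or a descendant of $ver'$ — contradicting the fact that no process can ever subsequently observe $ver'$. Therefore $\pr_i$ must have performed some $\rrw{r}{r_h,commit}_{\pr_i,j}$ on a shared ranked-register $j$.

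The main obstacle I expect is making rigorous the claim that ``the only channel through which a coverable write's effect becomes visible is a committing ranked-register write.'' This requires care about what $A$ may do internally: $\pr_i$ could, in principle, perform an aborting ranked-write and still have changed local state, but local state is invisible to $\pr_j$; and one must rule out that an \emph{aborting} ranked-write nonetheless leaves a trace that a later read can pick up. Here I would lean on the precise wording of ranked-register \textbf{Safety}: a read returns a rank/value pair that was the argument of some $\act{rr-write}$ invocation (committing or not) or the initial pair. Since this is slightly weaker than I'd like — an aborting write's pair \emph{could} be returned — the argument needs the extra step that $\pr_j$ must also be able to use the discovered version to \emph{commit} its own coverable write (so that $\trw{ver'}{ver'',chg}$, not merely $\trw{ver'}{ver'',unchg}$, occurs); producing that committed write, in turn, forces a committing ranked-write by $\pr_j$, and then one iterates or appeals directly to \textbf{Continuity} applied to the whole chain. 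I would set up the contradiction so that it bottoms out in either a \textbf{Continuity} violation (a produced version with no witnessed predecessor write) or a direct contradiction with ranked-register \textbf{Safety}, whichever is cleaner once the internal structure of $A$ is pinned down.
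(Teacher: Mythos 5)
Your top-level strategy coincides with the paper's: assume for contradiction that the successful write $\wrt_1=\trw{ver}{ver',chg}_{\pr_i}$ commits on no ranked register, extend the execution with a second, non-concurrent successful coverable write $\wrt_2$ by another process, use Consolidation and Continuity to force $\wrt_2$ to revise $ver'$, and derive a Continuity violation. Where you diverge is in the justification of the key step, and that is exactly where the gap sits --- one you flag yourself but do not close. Your argument is that $\pr_z$ ``cannot discover $ver'$'' because Safety only obliges reads to return committed values; but, as you observe, Safety as stated allows a read to return the argument of an aborting $\act{rr-write}$ invocation, so nothing formally prevents $\pr_i$'s aborted ranked writes from making $ver'$ visible. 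Your proposed repair (iterating, or applying Continuity to the whole chain) does not address this, and in any case ``what $\pr_z$ can learn'' is not a well-defined quantity: the implementation $A$ may encode versions in the ranked registers however it likes.

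The paper supplies the missing piece as an explicit indistinguishability argument that never reasons about returned values at all. Because every ranked write of $\wrt_1$ aborts, each used a rank strictly below the highest rank $r_j$ of the register it touched, so the highest rank of every ranked register is unchanged by $\wrt_1$. Comparing $\EX''$ (the base execution extended by $\wrt_1$ then $\wrt_2$) with $\Delta\EX''$ (the base execution extended by $\wrt_2$ alone), every ranked operation of $\pr_z$ observes the same highest rank in both, hence commits or aborts identically and returns the same result; so $\wrt_2$ revises $ver'$ in $\Delta\EX''$ as well, where $ver'$ was never produced --- a Continuity violation located entirely in the \emph{shorter} execution. If you recast your ``cannot discover'' claim as this two-execution comparison, your proof becomes the paper's.
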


%

Next we show that if $\op_1, \op_2$ are two non-concurrent write operations  
on the weakly coverable register, then $\op_2$ 
performs a ranked write (that commits or aborts) on at least a single 
ranked register on which $\op_1$ performed a committed ranked write operation.
For the sake of the lemma $R_i$ is the set of ranked registers on which 
$\op_i$ writes, and $cR_i$ a subset of them on which the write commits.

\begin{lemma}
\label{lem:rrcommon}
Let $\op_1 = \trw{ver}{ver_1,chg}_{\pr_i}$ and 
$\op_2 = \trw{ver_1}{ver_2,*}_{\pr_z}, i\neq z$, be
two write operations that appear in an execution $\EX$ s.t. 
$\op_1\bef_{\EX}\op_2$.
There exists some shared register $j\in R_2\cap cR_1$ with 
a highest rank $r_j$ before the invocation of $\op_1$, such that 
$\pr_i$ performs an $\rrw{r}{*,commit}_{\pr_i,j}$ during $\op_1$, 
and $\pr_z$ performs an $\rrw{r'}{*,*}_{\pr_z,j}$ during $\op_2$.
\end{lemma}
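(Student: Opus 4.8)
The plan is to argue by contradiction: suppose $\op_1$ and $\op_2$ are as stated, with $\op_1\bef_\EX\op_2$, but the registers that $\op_2$ touches are disjoint from the registers on which $\op_1$ committed, i.e. $R_2\cap cR_1=\emptyset$. I would first invoke Lemma~\ref{lem:commit} on $\op_1 = \trw{ver}{ver_1,chg}_{\pr_i}$ to conclude that $cR_1\neq\emptyset$, so there is at least one ranked register $j$ on which $\pr_i$ performs $\rrw{r}{*,commit}_{\pr_i,j}$ during $\op_1$. The goal is to show $\op_2$ must itself perform a ranked write on one such $j$; then choosing $j$ to be the one of these with the highest rank $r_j$ \emph{before} $\op_1$ is invoked gives the claimed statement.

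**The core argument.** Since $\op_1\bef_\EX\op_2$, every ranked write $\op_1$ performs returns (commit or abort) before $\op_2$ is invoked. In particular, for each $j\in cR_1$, the committed write $\rrw{r}{r_h,commit}_{\pr_i,j}$ completes before $\op_2$ starts. Now I would consider what $\op_2=\trw{ver_1}{ver_2,*}_{\pr_z}$ does: it attempts to revise version $ver_1$, which is precisely the version \emph{produced} by $\op_1$. The key observation is that $ver_1$ is carried in the value written by $\op_1$'s committed ranked writes, so for $\op_2$ to even know about $ver_1$ (as the version it tries to revise), it must have read a ranked register on which $ver_1$ was installed — and by the Safety property of ranked registers, a value/rank written by a committing $\act{rr-write}$ is exactly what gets returned by later $\act{rr-read}$s with higher rank. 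If $\op_2$ reads \emph{only} registers disjoint from $cR_1$, I need to derive that it cannot have learned $ver_1$ as the target version, contradicting that $\op_2 = \trw{ver_1}{\cdot}$. The precise bridge here is that the algorithm $A$ implements a \emph{weakly coverable} register: by Continuity, $\op_2=\trw{ver_1}{ver_2}$ requires a preceding successful $\op_1'=\trw{*}{ver_1}$ with $\op_1'<_\EX\op_2$; validity forces uniqueness of versions, so $\op_1'=\op_1$. Thus $\op_2$'s behavior is causally determined by $\op_1$'s effect, and that effect is recorded only on the registers in $R_1$ (indeed propagated via the committed writes in $cR_1$). If $\op_2$ touches no register in $cR_1$, then from $\op_2$'s local view the execution is indistinguishable from one in which $\op_1$'s committed writes never happened, so $\op_2$ could not have been invoked as $\trw{ver_1}{\cdot}$ — contradiction.

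**Finishing.** Having established $R_2\cap cR_1\neq\emptyset$, I pick $j\in R_2\cap cR_1$ maximizing the highest rank present on register $j$ immediately before $\op_1$ is invoked; call that rank $r_j$. By construction $\pr_i$ performs $\rrw{r}{*,commit}_{\pr_i,j}$ during $\op_1$ (since $j\in cR_1$) and $\pr_z$ performs $\rrw{r'}{*,*}_{\pr_z,j}$ during $\op_2$ (since $j\in R_2$), which is exactly the statement. I would also note the roles of Non-Triviality and Liveness are auxiliary here (they guarantee the operations of $A$ actually return and that aborts are ``justified''), and the main work is the indistinguishability step.

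**Expected main obstacle.** The delicate point is formalizing the indistinguishability/causality argument in the second paragraph without assuming more structure on $A$ than ``uses ranked-registers.'' One must be careful that $\op_2$ could in principle learn $ver_1$ indirectly — e.g. from a third operation $\op_3$ that ran between $\op_1$ and $\op_2$ and copied $ver_1$ into some register outside $cR_1$. Ruling this out requires tracing the ``first'' operation (in $<_\EX$ order, using Continuity) that reads $ver_1$ from a register in $cR_1$ and re-propagates it, and arguing that this operation or a chain leading to it must itself have committed on a register whose highest rank $\op_1$'s write raised — essentially an inductive/minimal-counterexample argument over the $<_\EX$ order. That inductive descent, and stating it cleanly, is where the real care is needed; the rest is bookkeeping with the ranked-register Safety clause.
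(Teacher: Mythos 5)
Your overall strategy matches the paper's: assume $cR_1\cap R_2=\emptyset$, invoke Lemma~\ref{lem:commit} to get $cR_1\neq\emptyset$, and then argue that $\op_2$ cannot distinguish the actual execution from one in which $\op_1$ never ran, deriving a contradiction from weak coverability. Two points of divergence are worth flagging. First, the way you land the contradiction --- ``$\op_2$ could not have been invoked as $\trw{ver_1}{\cdot}$'' --- aims at the wrong target: the version parameter of a $\act{cvr-write}$ is an input chosen by the environment, so nothing prevents $\op_2$ from being invoked with $ver_1$ in either execution, and no ``learning $ver_1$ through the registers'' argument is needed or available. The paper instead fixes the invocation $\trw{ver_1}{\cdot}_{\pr_z}$ in both $\EX_2=\EX\cdot\op_1\cdot\op_2$ and $\Delta\EX_2=\EX\cdot\op_2$, observes that under the disjointness assumption $\pr_z$ sees the same highest rank $r_j$ on every register in $R_2$ in both executions (so each of its ranked writes commits or aborts identically), and concludes that $\op_2$ returns the same response in $\Delta\EX_2$; since no write in $\Delta\EX_2$ produced $ver_1$, this violates \emph{Continuity}. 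Your ingredients (Continuity plus indistinguishability) are exactly right --- you just need to point them at the \emph{response} of $\op_2$ rather than at its invocation. Second, the ``main obstacle'' you anticipate ($ver_1$ leaking into registers outside $cR_1$ via an intermediate operation $\op_3$) does not arise in the paper, because the lemma is proved on a constructed execution containing only $\op_1$ followed by $\op_2$ after the prefix $\EX$; that restricted form is all the impossibility theorem ever uses, so no inductive descent over $<_\EX$ is required. If you insisted on the lemma for fully arbitrary executions your concern would be legitimate, but you can avoid that extra work entirely by adopting the paper's constructive framing.
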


Thus far we showed that a successful coverable write operation
needs to commit on at least a single ranked register (Lemma \ref{lem:commit}),
and two non-concurrent coverable write operations need to invoke a 
ranked write operation on a common rank register (Lemma \ref{lem:rrcommon}).   
Using now Lemma 
\ref{lem:rrcommon}
we can show that a coverable write operation that changes the version of 
the coverable register must use a rank higher than any previously
successful coverable write operation.

\begin{lemma}
\label{lem:rrorder}
In any execution $\EX$ if $\op_1 = \trw{ver}{ver_1,chg}_{\pr_i}$ and 
$\op_2 = \trw{ver_1}{ver_2,chg}_{\pr_z}$, $\!z\neq\! i$, s.t. 
$\op_1\bef_{\EX}\op_2$, then there exists some shared register $j$ such that 
$\pr_i$ performs an $\rrw{r}{*,commit}_{\pr_i,j}$ during $\op_1$, 
and $\pr_z$ performs an $\rrw{r'}{*,commit}_{\pr_z,j}$ during $\op_2$, and $r' > r$.
\end{lemma}

\remove{
\begin{lemma}
\label{lem:rrorder}
In any execution $\EX$ if $\op_1 = \trw{ver}{ver_1,chg}_{\pr_i}$ and 
$\op_2 = \trw{ver_1}{ver_2,chg}_{\pr_z}$, $\!z\neq\! i$, s.t. 
$\op_1\bef_{\EX}\op_2$, then there exists some shared register $j$ such that 
$\pr_i$ performs an $\rrw{r}{*,commit}_{\pr_i,j}$ during $\op_1$, 
and $\pr_z$ performs an $\rrw{r'}{*,commit}_{\pr_z,j}$ during $\op_2$, and $r' > r$.
\end{lemma}
}

%

Now we prove our main result stating that a weakly coverable 
register cannot be implemented with ranked registers as those
were defined in~\cite{GD05}.
 

\begin{theorem}
There is no algorithm that implements a weakly coverable register 
using a set of ranked registers. 
\end{theorem}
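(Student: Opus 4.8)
The plan is to assume, for contradiction, that some algorithm $A$ implements a weakly coverable register out of a collection of ranked registers, and then to construct a valid execution $\EX$ of $A$ whose history violates the \textbf{Consolidation} clause of Definition~\ref{def:weak}, contradicting the fact that $A$'s executions must be weakly coverable. The configuration I target is an execution containing two \emph{successful} coverable writes that revise the \emph{same} version, one completing before the other is invoked: concretely $\op_1=\trw{u}{u_1}_{\pr_1}$ and $\op_2=\trw{u}{u_2}_{\pr_2}$ with $\op_1\bef_{\hist{\EX}}\op_2$. By Validity (Definition~\ref{def:validity}) we have $u<u_1$, while Consolidation, read with $\op_1$ \emph{producing} $u_1$ and $\op_2$ \emph{revising} $u$, would force $u_1\le u$ --- a contradiction. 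So it suffices to exhibit such an execution of $A$.

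To build $\EX$, I would first run an operation $\op_0=\trw{ver_0}{u}_{\pr_0}$ in isolation to completion, so that $u$ becomes the current version (this is a legitimate run, and by atomicity and the transition relation of the versioned register type $\op_0$ returns $chg$). Next let $\pr_2$ perform a $\act{cvr-read}$ that returns $u$, and only \emph{afterwards} let $\pr_1$ run $\op_1=\trw{u}{u_1}_{\pr_1}$ alone to completion; since $u$ is current and there is no concurrency, $\op_1$ returns $chg$, and by Lemmas~\ref{lem:commit} and~\ref{lem:rrcommon} it commits a ranked write on at least one shared register, in particular on some register on which $\op_0$ committed. Finally, let $\pr_2$ invoke $\op_2=\trw{u}{u_2}_{\pr_2}$ after $\op_1$ has completed, noting that $\pr_2$'s local state at that moment is exactly what it was before $\op_1$ occurred.

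The core step is an indistinguishability argument: I would schedule the ranked registers' responses during $\op_2$ so that $\op_2$'s entire interface behaviour matches its behaviour in the twin execution $\EX'$ that is identical to $\EX$ except that $\op_1$ never happens. In $\EX'$, $\op_2$ runs solo on top of $\op_0$ with $u$ current, so by atomicity and the transition relation it must return $chg$; by indistinguishability it then returns $chg$ in $\EX$ as well, producing the forbidden pair. The justification leans on the weakness of ranked registers: (i) the rank $r_2$ that $\op_2$ uses is determined by information available to $\pr_2$ before $\op_1$ occurred (its prior $\act{cvr-read}$), and --- choosing the identities $\pr_1,\pr_2$ appropriately and using Lemma~\ref{lem:rrcommon} to pin down which registers $\op_2$ must re-touch (those $\op_0$ already wrote) --- $r_2$ can be kept no larger than the rank of $\op_1$'s committed ranked write; (ii) ranked-register \textbf{Safety} never forces a read of rank $r_2$ to observe a committed write of a rank $\ge r_2$, so each ranked register may return to $\op_2$ exactly the value it held before $\op_1$ (it must still reflect $\op_0$'s committed writes, but $\op_0$ is present in $\EX'$ too); (iii) ranked-register \textbf{Non-Triviality} only \emph{permits}, never \emph{forces}, an abort in the presence of a higher rank, so $\op_2$'s ranked writes can be made to commit exactly as in $\EX'$. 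Since $\op_2$'s interaction with every register it touches is then identical in $\EX$ and $\EX'$ and its local computation is deterministic, $\op_2$ returns $chg$ in $\EX$.

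The main obstacle is making (i)--(iii) airtight against an \emph{adversarial} $A$ that tries to leak the existence of $\op_1$ to $\op_2$ through the ranked-register interface --- for instance by having $\op_2$ probe with a deliberately high rank so that Safety forces it to see $\op_1$, or by funneling $\op_1$ and $\op_2$ through a common register and using a Non-Triviality-permitted abort on $\op_2$'s lower-rank write as a staleness signal. Ruling this out is precisely what Lemmas~\ref{lem:commit}, \ref{lem:rrcommon} and~\ref{lem:rrorder} are set up for: Lemma~\ref{lem:commit} guarantees $\op_2$ has \emph{some} register on which its ranked write commits, Lemma~\ref{lem:rrcommon} forces the shared registers (hence the relevant rank comparisons) to be the ones $\op_0$ already wrote, and Lemma~\ref{lem:rrorder} shows that staying ``ahead'' of a successful chain strictly requires \emph{strictly increasing} ranks, so a process holding only the stale version $u$ (hence an old rank) genuinely cannot deduce, from any ranked-register responses permitted by the axioms, that $u$ has been revised. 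Concretely, I would argue that any $unchg$ return by $\op_2$ in $\EX$ can be matched by an $unchg$ return in the solo twin $\EX'$ --- contradicting the required solo behaviour there --- so $\op_2$ must return $chg$, and the pair $\op_1\bef_{\hist{\EX}}\op_2$ both successfully revising $u$ violates Consolidation. Carefully tracking exactly which information crosses from $\op_1$ to $\op_2$ through the ranked-register interface, and showing it is always ``erasable'' consistently with Safety and Non-Triviality, is where the bulk of the work lies.
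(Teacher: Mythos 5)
Your overall strategy is the paper's: build an execution in which a write revising a stale version is made indistinguishable from its solo twin, so that it must return $chg$ and break Consolidation/Continuity, exploiting the fact that Safety and Non-Triviality never force a lower-ranked operation to observe a higher-ranked commit. The gap is in your step (i). You assert that $r_2$, the rank $\pr_2$ uses in $\op_2$, ``can be kept no larger than the rank of $\op_1$'s committed ranked write'' by choosing process identities appropriately, and later that holding only the stale version $u$ means holding ``an old rank.'' Neither holds against an arbitrary algorithm $A$: the ranks are chosen by $A$, not by you, and nothing ties a process's rank to the version it last read ($\pr_2$ could, for instance, derive its rank from a local step counter and probe with a rank exceeding $r_{\op_1}$). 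In that case Safety (Definition~\ref{def:rr}) \emph{forces} $\op_2$'s $\act{rr-read}$ on the common register to return a rank at least $r_{\op_1}$, so $\op_2$ does distinguish $\EX$ from the twin $\EX'$, may legitimately return $unchg$, and your contradiction evaporates. The lemmas you invoke do not close this case: Lemma~\ref{lem:rrorder} concerns two writes that both \emph{succeed} along the version chain, not a stale write that is supposed to fail, so it gives no handle on the situation $r_2 > r_{\op_1}$.

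The missing idea --- the engine of the paper's argument, located in the proof of Lemma~\ref{lem:rrorder} --- is not to fix in advance which of the two operations goes first. Because the processes communicate only through the registers, the rank each process uses is determined by its local pre-state, which is the same whichever of the two operations is scheduled first; so the two ranks are fixed before the order is chosen. You then schedule \emph{whichever operation carries the smaller rank} second: for that one, Safety permits every common register to answer exactly as in the solo execution and Non-Triviality permits its ranked writes to commit, so it cannot distinguish the two executions, returns $chg$, and violates Continuity (it revises a version never produced in that execution) or Consolidation. This order-reversal (the $\EX_1$ versus $\Delta\EX_1$ case analysis in Appendix~\ref{appx:rr}) is what defeats every possible choice of ranks by $A$, and it is the step your proposal still needs.
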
 

\begin{proof}
The theorem follows from Lemmas \ref{lem:commit}, \ref{lem:rrcommon}, and \ref{lem:rrorder}, and the fact 
that a ranked register allows a write operation to commit even if it uses a rank smaller than the highest
rank of the register. As by Lemma \ref{lem:commit} a successful write must commit, then by ranked 
registers it can commit with a rank smaller than the highest rank of the accessed register. This, 
however, by Lemma \ref{lem:rrorder} may lead to violation of the consolidation and continuity 
properties and thus violation of weak coverability. 
\end{proof}
 
Observe that the key fact that makes ranked registers weaker than weakly coverable registers is that the former
allow write operations to commit even if their ranks are out of order. 
In particular, note that the Non-Triviality property \emph{does not force} a write operation 
invoked with a rank $r_1$ to abort, even if there exists a completed prior 
operation with rank $r_2 > r_1$. As shown in \cite{GD05} \emph{non-fault-tolerant} 
ranked registers may preserve the total order of the ranks, and thus be used to 
implement consensus. As we show in Appendix~\ref{appx:consensus} 
such ranked registers (i.e., that implement consensus) could be used to implement strongly coverable registers.




\remove{
\subsection{Strong coverable Register vs Consensus}
\label{ssec:consensus}
Consensus \cite{Lynch1996} is defined as the problem where a set of fail-prone processes
try to agree on a single value for an object. A consensus protocol
must specify two operations: (i) $propose(v)_{\pr_i}$, used by the process $\pr_i$ to propose
a value $v$ for the object, and (ii) $decide()_{\pr_i}$, used by the process $\pr_i$ to decide 
the value of the object. Any implementation 
of consensus must satisfy the following three properties: 
{\bf (1) CTermination:} Every correct process decides a value;
	{\bf (2) CValidity:} Every correct process decides at most 
	one value, and if it decides some value $v$, then $v$ must have
	been proposed by some process;  
	{\bf (3) CAgreement:} All correct process must decide the same value.

It is not difficult to show that strong coverable objects are equivalent to 
consensus objects. For this, one needs to develop 
an implementation of a consensus object using 
a strong coverable read/write register, and an
implementation of a strong coverable 
read/write register assuming the existence of a consensus object. 
For completeness we provide the complete discussion and proof of equivalence 
in Appendix \ref{sec:consensus}. 
}

\section{Applications of Weakly Coverable Atomic Read/Write Registers}
\label{sec:applications}

\paragraph{Weak RMW registers.}

A shared object satisfies atomic \emph{read-modify-write} (RMW) semantics if 
a process can atomically \emph{read} and \emph{modify} 
the value of the object using some 
function $\mathcal{F}$, and then \emph{write} the new value on 
the object. 
%
Weakly coverable atomic R/W registers 
can be used to implement a weak version of RMW semantics. 
In a weak RMW object not all operations may successfully 
modify the value of the object. In case that a RMW operation
is not concurrent with any other operation then this operation
satisfies the RMW semantics. In case where two or more operations
invoke RMW concurrently, at least one of them will satisfy the 
RMW semantics. Finally, weak RMW allow multiple 
RMW operations to modify successfully the same value.  

Figure~\ref{fig:rmw} presents an implementation of a weak RMW object 
using weakly coverable atomic R/W registers. We assume that the object 
offers a $\act{rmw}(\mathcal{F})$ action that accepts a function 
and tries to apply that function on the value of the object.  
The object returns the initial value of the object and a flag 
indicating whether the value of the object was modified successfully.

\begin{figure}[!h]
	
	\hrule\vspace{0.15cm}
	\begin{footnotesize}
	At each process $i\in\idSet$\\
	Local Variables: $lcver\in Versions, oldval, lcval, newv \in Values, flag\in\{chg, unchg\}$\\
	
	{\bf function} {\sc Rmw}($\mathcal{F}$)
	\begin{algorithmic}
		\State $\tup{oldval, lcver}\gets \act{cvr-read}()$ 
		\State $newv \gets \mathcal{F}(oldval)$
		\State $\tup{lcval,lcver, flag} \gets \act{cvr-write}(lcver, newv)$
		\If { $flag == chg$ } return $\tup{lcval, success}$		
		\Else {} return $\tup{lcval, fail}$	
		\EndIf
	\end{algorithmic}
	\end{footnotesize}
	\hrule\vspace{.4em}
	\caption{Weak RMW using Weakly Coverable Atomic R/W Registers}\vspace{-1.2em}
	\label{fig:rmw}
\end{figure}

\begin{theorem}
\label{thm:wrmw}
The construction in Figure \ref{fig:rmw} implements 
a weak RMW object.
\end{theorem}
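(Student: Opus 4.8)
The plan is to check, one at a time, the four features that define a weak RMW object (not every operation succeeds; an isolated operation obeys the RMW semantics; among concurrently invoked operations at least one does; and several operations may succeed on the same value), using only that the base object is weakly coverable and atomic (Definitions~\ref{def:weak},~\ref{def:atomic}) and that the \emph{only} operations ever performed on it are the $\act{cvr-read}()$ and $\act{cvr-write}(\cdot)$ calls issued inside $\act{rmw}(\cdot)$. Fix an execution $\EX$; each $\act{rmw}(\mathcal{F})$ invoked by $\pr_k$ performs a $\act{cvr-read}()$ operation $\rho_k$ immediately followed by a $\act{cvr-write}(lcver,newv)$ operation $\omega_k$ that targets the version $ver$ returned by $\rho_k$, with $newv=\mathcal{F}(oldval)$ where $oldval$ is the value returned by $\rho_k$; and $\act{rmw}$ returns $success$ iff $\omega_k$ returns $chg$, i.e.\ iff $\omega_k\in\wSet_{\EX,succ}$. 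Atomicity gives a linearization $\prec$ of the $\rho$'s and $\omega$'s respecting real-time precedence (A1), totally ordering successful writes against everything (A2), and making every read or unsuccessful write return the value/version of the $\prec$-latest preceding successful write (A3). The central fact I will use is the following pivot $(\ast)$: \emph{if $\omega_k$ returns $unchg$, then some successful write is linearized strictly between $\rho_k$ and $\omega_k$}; this holds because, by A3, $ver$ is the version of the $\prec$-latest successful write before $\rho_k$, so if no successful write lay $\prec$-between $\rho_k$ and $\omega_k$ then $\omega_k$ would target the version that is current at its linearization point and hence commit (transition~(1) of the versioned-register type). Feature~1 is then immediate: $(\ast)$'s hypothesis is satisfiable --- a two-operation schedule where one $\act{cvr-write}$ revises $ver$ before the other does --- so $\act{rmw}$ can return $fail$.

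For an \emph{isolated} $\act{rmw}$, i.e.\ a $\pi_k$ concurrent with no other RMW, I first observe that nothing is linearized strictly between $\rho_k$ and $\omega_k$: an operation $\op'$ with $\rho_k\prec\op'\prec\omega_k$ that is not concurrent with $\pi_k$ would satisfy $\op'\bef\rho_k$ or $\omega_k\bef\op'$ in real time, contradicting A1. In particular no successful write lies there, so by $(\ast)$ the operation $\omega_k$ commits. Since moreover nothing is observed between $\rho_k$ and $\omega_k$, the pair acts as a single instantaneous step that reads $oldval$ and writes $\mathcal{F}(oldval)$ as the new version $ver'>ver$ (the latter by Validity); hence $\pi_k$ is an atomic read--modify--write.

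For the \emph{concurrent} case, let $S$ be a connected component, of size at least two, of the graph on RMW operations whose edges join concurrent pairs, and suppose toward a contradiction that every $\pi_k\in S$ returns $fail$. For each $\pi_k\in S$, $(\ast)$ yields a successful write $\sigma_k$ with $\rho_k\prec\sigma_k\prec\omega_k$; since every write in $\EX$ is the $\act{cvr-write}$ of some $\act{rmw}$, we have $\sigma_k=\omega_j$ for an RMW $\pi_j$, which therefore succeeded and so $\pi_j\notin S$. But $\rho_k\prec\sigma_k\prec\omega_k$ forces the real-time intervals of $\pi_j$ and $\pi_k$ to overlap (again by A1), so $\pi_j$ is concurrent with $\pi_k\in S$ and hence $\pi_j\in S$ --- a contradiction. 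Thus some $\pi_k\in S$ succeeds, and, by the same reasoning as in the isolated case (now using that no successful write lies $\prec$-between its $\rho_k$ and $\omega_k$), it performs an atomic read--modify--write on the version it read. Finally, for the last feature, weak coverability lets the version tree $\mathcal{T}$ have a node $ver$ with several children --- several successful writes $\trw{ver}{\cdot}$, necessarily mutually concurrent since by Consolidation a completed successful $\trw{ver}{ver'}$ would force every later successful write to target a version $\ge ver'>ver$; the construction realizes exactly this whenever two $\act{rmw}$ operations have $\act{cvr-read}$s returning $ver$ and $\act{cvr-write}$s committing to distinct versions, each being a successful read--modify--write on $ver$. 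Combining the four items gives the theorem, up to the routine check that the $\tup{value,flag}$ pair returned by $\act{rmw}$ matches the specification in each case.

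The step I expect to be the main obstacle is justifying the pivot $(\ast)$ rigorously: Definitions~\ref{def:atomic} and~\ref{def:weak} constrain \emph{which} $\act{cvr-write}$s succeed only indirectly (through the version tree and the Consolidation/Continuity/Evolution conditions and the read-oriented conditions A1--A3), so $(\ast)$ really rests on the register being linearizable with respect to the versioned-register type --- in particular on transition~(1), that a $\act{cvr-write}$ whose target is the current version commits --- which is slightly stronger than A1--A3 as literally written and should be stated as the operating assumption on weakly coverable atomic registers. A secondary point is getting the right closure for ``a set of concurrently invoked operations'' in the third item (the connected component of the concurrency graph), so that the operation that preempts a failed write is guaranteed to lie inside the set.
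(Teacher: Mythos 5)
Your proof is correct, but it takes a genuinely different route from the paper's. The paper argues concretely: it treats the isolated case, then enumerates five interleavings of the internal $\act{cvr-read}$/$\act{cvr-write}$ pairs of \emph{two} concurrent RMW operations (write of one before read of the other, read--write--write nestings, and fully concurrent writes), checking in each that at least one $\act{cvr-write}$ commits by consolidation/continuity/validity. You instead isolate a single pivot --- if $\omega_k$ returns $unchg$ then some successful write is linearized strictly between $\rho_k$ and $\omega_k$ --- and combine it with a connected-component argument on the concurrency graph: if every operation in a component failed, each would be preempted by a committed write belonging to a succeeding RMW that must itself be concurrent with the component, a contradiction. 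Your version buys uniformity (it handles any number of concurrent RMWs, which the paper's two-operation case analysis does not literally cover) and it makes explicit the assumption both proofs actually rest on, namely that a $\act{cvr-write}$ targeting the currently linearized version \emph{must} commit; the paper attributes this to ``consolidation and continuity,'' but those properties only constrain writes that do succeed, so the real justification is transition (1) of the versioned-register type, exactly as you flag. The paper's version buys concreteness: it tells you which of the two operations wins in each interleaving. Two small polish points: in the concurrent case, the component argument gives you a $\pi_k$ whose $\omega_k$ commits, but it does not give you that no successful write is linearized between $\rho_k$ and $\omega_k$, so drop that parenthetical and instead observe (as the paper does in its case $(v)$) that a committed $\omega_k$ revising the version read by $\rho_k$ already satisfies the \emph{weak} RMW semantics even if another concurrent write also revises that version; and your ``feature 4'' (multiple successes on one value) is permissive rather than obligatory, so exhibiting one realizing schedule, as you do, suffices.
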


\begin{proof}
Consider an execution $\EX$ of the algorithm. 
We begin the proof by studying the case where an operation
$\act{rmw}(\mathcal{F})$ is not concurrent with any other operation
in $\EX$.
The atomic nature of the register ensures that $\act{cvr-read}$
returns the latest value and version, say $\tup{ver,val}$, written on the register. 
When the $\act{cvr-write}$ operation is invoked, the write operation
tries to modify the value associated with version $ver$. As there is 
no concurrent operation, the version of the register remains $ver$ 
and thus according to \emph{consolidation and continuity}, 
the write operation successfully writes the new value completing the 
RMW operation.

Consider now the case of two operations, $\op_1$ and $\op_2$, 
invoking $\act{rmw}$ concurrently. 
Each of these operations involve a $\act{cvr-read}$ 
followed by a $\act{cvr-write}$ operation. Let $\rd_{\op_i}$ (resp. $\wrt_{\op_i}$)
denote the read (resp. write) operation invoked during $\op_i$, for $i\in[1,2]$. We have the 
following cases wrt the order of these operations:
$(i)$ $\wrt_{\op_1} \bef \rd_{\op_2}$,
$(ii)$ $\wrt_{\op_2} \bef \rd_{\op_1}$,
$(iii)$ $\rd_{\op_2}\bef\wrt_{\op_1} \bef \wrt_{\op_2}$,
$(iv)$ $\rd_{\op_1}\bef\wrt_{\op_2} \bef \wrt_{\op_1}$, or
$(v)$ $\wrt_{\op_1}$ is concurrent with $\wrt_{\op_2}$.
In case $(i)$, both read and write operations of
$\op_1$ complete before the read and write operations
of $\op_2$ are invoked. In this case notice that 
the version of the object remains the same from the read
to the write operation of both operations. Thus, 
according to \emph{consolidation and continuity}, both write operations
will successfully change the value of the register. 
The same holds for case $(ii)$, where $\op_2$'s ops 
complete before the invocation of $\op_1$'s ops. 
In case $(iii)$ the write operation of $\op_1$ completes 
before the write operation of $\op_2$. Let $\rd_{\op_2}$
in this case complete before $\wrt_{\op_1}$. Both read 
operations $\rd_{\op_1}$ and $\rd_{\op_2}$ discover by \emph{atomicity} 
the same version, say $ver$. So both write operations will 
be invoked as $\act{cvr-write}(ver, v)$. Since no operation 
changes the version of the register before $\wrt_{\op_1}$ is 
invoked, then by \emph{consolidation and continuity}, $\wrt_{\op_1}$ changes the version 
of the object to, say, $ver_{\op_1}$. Notice that 
by \emph{validity}, $ver_{\op_1}>ver$. When $\wrt_{\op_2}$ is
invoked it fails by \emph{consolidation} to change the value of
the object as $\wrt_{\op_1} \bef \wrt_{\op_2}$ and it tries 
to change the version $ver < ver_{\op_1}$ (the version of $\wrt_{\op_1}$). Hence,
only $\op_1$ will manage to preserve RMW semantics. Similarly,
we can show that only $\op_2$ will preserve RMW semantics in 
case $(iv)$. 
Finally, in case $(v)$ if both writes try to change the version 
$ver$,  both may succeed and preserve 
RMW semantics. Since, however, their versions are unique and 
comparable, then by \emph{consolidation} any subsequent operation will RMW the 
highest of the two versions. 
So in all cases at least a single 
operation satisfies the RMW semantics, as desired.
\end{proof}

From the proof we can extract that weakly coverable registers 
may allow multiple writes to change the same version
of the register, but
\emph{consolidation} ensures that at least one write satisfies 
RMW semantics for each version. 
Finally,  \emph{consolidation and continuity} ensure that eventually RMW operations 
diverge in a single path in the constructed tree.
\vspace{-.5em}

\paragraph{Concurrent File Objects}

A file object can be implemented directly using RMW semantics since one can retrieve, revise, and write back the new version of the file. As RMW semantics can be used to solve consensus \cite{H91}, 
they are impossible to be implemented in an asynchronous
system with a single crash failure. 
Therefore, we consider file objects that comply to the weak
RMW semantics as those were given in the paragraph above. 
In particular,  
we consider \emph{concurrent file objects} that allow two fundamental operations, \emph{revise} and 
\emph{get} to be invoked concurrently by multiple processes. The $\act{revise}$ operation is used to change the contents of the file object,
whereas the $\act{get}$ action is analogous to a read operation and facilitates the retrieval
of the contents of the file. Semantically, a file object requires that a revise operation is 
applied on the latest version of the file and a get operation returns the file associated
with the latest written version.  
Depending on the implementation, the values written and returned
by these operations can be the complete file object, a fragment of the file object, or just 
the journal containing the operations to be applied on a file (similar to a journaled file system).

Figure \ref{fig:file} presets the algorithm that implements the two operations. The \emph{revise} 
operation specifies the version of the file to be revised along with the new 
value of the shared object. 
The $\act{cvr-write}$ operation attempts to perform the write with the given  version and 
returns the value and version of the register, and whether the write succeeded or not. 
If the write succeeded then the operation informs the application 
for the proper completion of the revise operation; otherwise the latest discovered
value-version pair is returned. From Theorem~\ref{thm:wrmw} and Figure~\ref{fig:file}
we may conclude the following theorem.\vspace{-.3em}

\begin{figure}[t]
	
	\hrule\vspace{0.15cm}
	\begin{footnotesize}
	\begin{multicols}{2}
	At each process $i\in\idSet$\\
	Local Variables: 
	$lcver \in Versions$, initially $ver_0$
	$lcval, newv \in Values$, initially $\bot$
	$flag\in\{chg, unchg\}$, initially $chg$\\
	
	{\bf function} {\sc Revise}($v,ver$)
	\begin{algorithmic}
			\State $\tup{lcval,lcver, flag} \gets \act{cvr-write}(ver, v)$
			\If {$flag == chg$}
				\State return OK
			\EndIf		
		\State return $\tup{lcval, lcver}$\\
	\end{algorithmic}
	
	{\bf function} {\sc Get}()
	\begin{algorithmic}
		\State $\tup{lcval, lcver}\gets \act{cvr-read}()$ 
		\State return $\tup{lcval, lcver}$
	\end{algorithmic}
	\end{multicols}
	\end{footnotesize}
	\hrule\vspace{.4em}
	\caption{File Object using Weakly Coverable Atomic R/W Registers}\vspace{-2em}
	\label{fig:file}
\end{figure}

\begin{theorem}
\cg{The construction in Figure \ref{fig:file} implements 
a file object.}
\end{theorem}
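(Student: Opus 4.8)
The plan is to derive the statement from Theorem~\ref{thm:wrmw} together with the atomicity and weak-coverability guarantees of the underlying register. I would first dispose of the get operation: a $\act{get}()$ simply returns the output of a $\act{cvr-read}()$, so by atomicity (condition~{\bf A3} of Definition~\ref{def:atomic}) it returns the value and version written by the last preceding successful write in the atomicity order $\prec_{\hist{\EX}}$ (or $ver_0$ if there is none). By \emph{validity} and \emph{weak coverability} that version is the largest element of $\verSet_{\EX}$ produced so far and lies on the single root-to-leaf path of the coverability tree $\mathcal{T}$ that is being extended; hence $\act{get}$ returns the file contents associated with the latest written version, as required of a file object.

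For the revise operation I would make explicit the intended usage, in which an application learns a version $ver$ via a $\act{get}$ and then calls $\act{revise}(v,ver)$ with $v$ the file it wishes to install. The composite ``$\act{get}$ then $\act{revise}$'' action is exactly the body of $\act{rmw}(\mathcal{F})$ from Figure~\ref{fig:rmw}, with $\mathcal{F}$ the revision function mapping the retrieved file to $v$; Figure~\ref{fig:file} merely exposes the query phase as a separate operation instead of performing it internally. Consequently the case analysis in the proof of Theorem~\ref{thm:wrmw} transfers. In particular, if a $\act{revise}$ together with its preceding $\act{get}$ is not concurrent with any other operation, then by atomicity the version of the register is unchanged between the $\act{get}$ and the enclosed $\act{cvr-write}$, so by \emph{consolidation} and \emph{continuity} the write succeeds and the revision is applied to the latest version; if several such actions run concurrently, at least one succeeds, possibly several revise the same version into distinct versions (unique by \emph{validity}), and by \emph{consolidation} and \emph{continuity} every later operation revises the highest of those versions, so the branching of $\mathcal{T}$ is bounded and the revisions eventually proceed along a single path. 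When the supplied $ver$ is stale --- precisely the case in which $\act{cvr-write}$ returns the flag $unchg$ --- $\act{revise}$ returns the current value--version pair, which is exactly what the application needs in order to re-issue the revision against the up-to-date version.

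I expect the main obstacle to be making this reduction rigorous despite the fact that $\act{get}$ and $\act{revise}$ are separate operations, so arbitrary operations may be interleaved between a process's $\act{get}$ and its matching $\act{revise}$. One must therefore phrase ``not concurrent with any other operation'' at the granularity of the composite get--revise action --- equivalently, of the enclosed $\act{cvr-write}$ together with the version it carries --- and check that the $unchg$ return value combined with the natural retry discipline still yields the weak-RMW guarantee that the file-object specification inherits. Once this correspondence is pinned down, the theorem follows from Theorem~\ref{thm:wrmw} and the atomicity and weak-coverability properties of the underlying coverable atomic register.
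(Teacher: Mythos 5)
Your proposal is correct and follows exactly the route the paper intends: the paper offers no explicit proof, stating only that the theorem follows from Theorem~\ref{thm:wrmw} and the construction in Figure~\ref{fig:file}, which is precisely the reduction you carry out (get is a $\act{cvr-read}$ handled by atomicity, and a get followed by a revise is the body of the weak-RMW operation with the query exposed as a separate step). Your closing observation --- that the get and the revise are separate operations so the ``not concurrent'' case must be phrased at the granularity of the composite action and the $unchg$/retry discipline checked --- is a genuine subtlety the paper glosses over, and making it explicit only strengthens the argument.
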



\section{Implementing Weakly Coverable Atomic Read/Write Registers\vspace{-.5em}}
\label{sec:algorithms}

\cg{We now show how we can implement weak coverable atomic registers. We do so by 
enhancing the Multi-Writer version of algorithm ABD~\cite{ABD96,LS97} (referred as \mwABD)
to preserve the properties of weak coverability.} 
The presented technique can be applied to implementations
of atomic R/W objects that utilize a $\tup{tag,value}$ pair 
to order the 
write operations and where each write performs 
two phases before completing: a \emph{query phase} to obtain 
the latest value of the atomic object and a \emph{propagation 
phase} to write the new value on the object. 
We could also adopt implementations of 
stronger objects like the ones presented  
in \cite{BDFG03, GD05, CDS13-BA, DVV14} but we 
preferred to show the simplest modification in
a fundamental algorithm. 
%
\remove{
Our algorithm aims for the
strategy ``hope for the best" and allows write operations to propagate values that can be hidden
to succeeding \emph{reads} or \emph{writes}. Our technique, however, always 
provides \emph{provable} guarantees on the state of the object as it 
does not compromise atomicity.
In the next section we provide an enhancement on the basic ABD algorithm
\cite{ABD96} and in the following section we show how our technique can
be used to enhance atomic implementations designed for large objects
\cite{FL03}. We show how coverability helps implementations of large 
data objects to support the management and maintenance of \emph{file objects}.
} 
%
%
To capture the semantics of a coverable atomic register we modify the operations 
of algorithm \mwABD{} 
to comply with the versioned variant of the R/W register. 
We use $\act{cvr-write}(ver, v)$ and $\act{cvr-read}()$ as the
write and read operations respectively. 
A  $\act{cvr-write}(ver, v)$ operation \cg{may impact differently the state of the object,}
depending on the version of the shared object: 
it may appear as a \emph{read} not modifying the value nor the version of the 
register or as a \emph{write} changing both the value and the 
version of the register. 

In brief, the original \mwABD{} replicates an object to a set of hosts 
$\srvSet\subset \idSet$ and it uses $\tup{tag, value}$ pairs to order the 
\emph{read} and \emph{write} operations. 
A $tag$ consists of a \emph{non-negative integer} number and a \emph{writer identifier} which is used to break the 
ties among concurrent write operations. 
Both the read and write protocols have two phases: a \emph{query} and a \emph{propagation} phase.  
During the \emph{query} phase the invoking process 
broadcasts a query message to all the replica hosts (replicas)  and 
waits for a majority of them to reply with their 
tag-value pairs. Once those replies are received the 
process discovers the largest tag-value pair among the replies.
In the second phase, a read operation 
propagates the discovered tag-value pair to the majority of the replicas. 
A write operation increments the largest tag, associates the 
new tag with the value to be written, and propagates the new 
tag-value pair to the majority of the replicas.

In the \emph{versioned \mwABD}, \cg{v\mwABD\ for short}, we use the tags associated with 
each value to denote the version of the register. The pseudocode 
of each operation of v\mwABD\ is described in Figure~\ref{fig:abd}.
The $\act{cvr-read}$ operation is similar to the read
operation of \mwABD{} with the difference that it 
returns both the value and the version of the register.  
A $\act{cvr-write}$ operation differs from the original write
by utilizing a condition before its \emph{propagation} phase
\cg{and depending whether the condition holds it changes the state
of the register (value and version) or not,  
as detailed in Figure~\ref{fig:abd}.} Note that 
the version parameter of the write operation is equal to
the maximum tag that the invoking process witnessed.\vspace{-.3em}
\remove{
In particular,
a $\act{cvr-write}$ is executed in two phases. During the
first phase it collects the value of the object from a majority of replicas
and discovers the maximum tag among the replies. Before 
moving to a second phase the $\act{cvr-write}$ checks whether 
its local tag is equal to the maximum tag discovered in
the first phase. If this is the case then the $\act{cvr-write}$
acts as the original \mwABD{} write operation, 
by incrementing the tag and sending the new tag along with the value to be written to a majority of 
replicas. If the condition does not hold then the $\act{cvr-write}$ acts as
the second phase of a read operation in the original \mwABD{} and propagates the 
maximum tag together with its associated value to the majority of replicas. 
In the first case the write operation returns the value and version written 
along with the flag $chg$ to indicate the modification of the register. 
In the latter case the operation returns the maximum tag-value pair discovered 
in its query-phase along with the flag $unchg$ to indicate that the 
register was not modified. 
}

\begin{figure}[t]
\hrule\vspace{0.15cm}
\begin{footnotesize}

			
		$\act{cvr-write}(val, ver = maxtag)$
			\begin{itemize}[leftmargin=5mm]
				\item[] \act{query-phase}: Send query request to \emph{all} the replicas and wait to 
				receive $(tag,value)$ responses from a majority of them. Select the 
				$(tag,value)$ among the collected replies with the largest tag; 
				let the $\tup{\tau,v}$ be this pair and the integer component of $\tau$ be $z$. Then: \vspace{-0.5em}
				
				\begin{itemize}[leftmargin=5mm]
					\item If $ver == \tau$	then: 
					Create a new tag $\tau_{new} = \tup{z+1, wid}$ 
					where $wid$ is the unique identifier of the writer
					and set $val_{new} = val$. 
					
					\item If $ver \neq \tau$ then: Set $\tup{\tau_{new}, val_{new}} = \tup{\tau,v}$. \vspace{-0.5em}
					
				\end{itemize}
				
				\item[] \act{propagation-phase}: 
					Send $\tup{\tau_{new}, val_{new}}$ to all the replicas and wait to 
					receive responses from a majority of them. 
					if $ver == \tau$ then respond with $\tup{val_{new}, \tau_{new}, chg}$,
					otherwise respond with $\tup{val_{new}, \tau_{new}, unchg}$ to the process. 
			\end{itemize}
			
			$\act{cvr-read}()$
			\begin{itemize} [leftmargin=5mm]
				\item[] \act{query-phase}: Send query request to \emph{all} the replicas and wait to 
				receive $(tag,value)$ responses from a majority of them. Select the 
				$(tag,value)$ among the collected replies with the largest tag; 
				let the $\tup{\tau,v}$ be this pair and the integer component of $\tau$ be $z$.
				
				\item[] \act{propagation-phase}: 
					Send $\tup{\tau, v}$ to all replicas and wait for responses from a majority of them. 
					Respond with $\tup{v, \tau}$ to the process. 
				\end{itemize}
				
			\act{at-replica}
			\begin{itemize} [leftmargin=5mm]
				\item[] On receipt of \act{query} message: Send the tag-value 
				pair $\tup{\tau_{r},v_r}$ stored locally.
				
				\item[] On receipt of \act{propagation} message: Let $\tup{\tau_m,v_m}$ be the tag-value
				pair enclosed in the received message and $\tup{\tau_r,v_r}$ the local pair on the replica. 
				Compare the tags $\tau_m$ and $\tau_{r}$. If $\tau_{m} > \tau_{r}$ then 
				store $\tup{\tau_m,v_m}$ locally. Reply with ``ack".
			\end{itemize}
		\end{footnotesize}
		\hrule\vspace{.2em}
\caption{The operations of algorithm v\mwABD.}
\label{fig:abd}\vspace{-1em}
\end{figure}

\begin{theorem}
Algorithm \rm{v}\mwABD{} implements weak coverable atomic registers.
\end{theorem}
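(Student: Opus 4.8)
The plan is to fix an arbitrary execution $\EX$ of v\mwABD{} and verify, in turn, that it is valid, atomic, and weakly coverable with respect to the order on $\wSet_{\EX,succ}$ produced by the atomicity argument. Throughout I use the identification of \emph{versions with tags}: an invocation $\act{cvr-write}(val,ver)$ is \emph{successful} ($flag=chg$) exactly when its parameter $ver$ equals the largest tag $\tau$ discovered in its query phase, and in that case it produces the tag $\tup{z{+}1,wid}$ whose integer component is one more than the integer component $z$ of $\tau=ver$; every other operation (a $\act{cvr-read}$ or an unsuccessful $\act{cvr-write}$) merely re-propagates the $\tup{\tau,v}$ pair it discovered, exactly as an \mwABD{} read does. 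A preliminary fact, used repeatedly, is that the tags produced by successful writes are pairwise distinct: two such writes by distinct writers carry distinct $wid$, while two such writes by the same writer are sequential, so by intersecting the query majority of the later one with the propagation majority of the earlier one, the later one discovers a tag whose integer component strictly exceeds that of the tag produced by the earlier one.

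First I would establish \emph{atomicity}. Under the identification above, every v\mwABD{} operation performs a query phase to a majority followed by a propagation phase to a majority; the successful $\act{cvr-write}$ operations behave precisely like \mwABD{} writes (increment the largest tag, attach the new value, propagate) and all other operations behave precisely like \mwABD{} reads (re-propagate the largest discovered tag–value pair). Hence $\EX$ projects onto a legal execution of \mwABD{}, and atomicity follows from the known atomicity of multi-writer ABD~\cite{ABD96,LS97,Lynch1996}; the partial order $\prec_{\hist{\EX}}$ of Definition~\ref{def:atomic} obtained there orders the successful writes by their (distinct) tags, and it is this tag order, call it $<_\EX$, with respect to which I verify weak coverability. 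One must check here that reinterpreting unsuccessful $\act{cvr-write}$s as reads indeed yields a legitimate \mwABD{} client behaviour (an arbitrary interleaving of reads and writes by each sequential process) and that the flag, being local output, never affects the shared state.

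Next I would handle \emph{validity} and \emph{Continuity} together, since both rest on a ``tag traceback''. For $\op=\trw{ver}{ver'}\in\wSet_{\EX,succ}$ the algorithm gives $ver'=\tup{z{+}1,wid}$ with $z$ the integer component of $ver$, so $ver<ver'$; uniqueness was argued above. For the last bullet of validity and for Continuity, observe that a successful write with parameter $ver$ must have read the tag $ver$ from some replica, a replica stores a tag only after receiving it in a propagation message, and a non-$ver_0$ tag is carried in a propagation message only if it was produced by some successful $\act{cvr-write}$; since integer components strictly decrease along produced tags traced backwards and are bounded below by that of $ver_0$, this terminates. Thus either $ver=ver_0$ or there is $\op_1=\trw{*}{ver}\in\wSet_{\EX,succ}$; iterating yields the chain from $ver_0$ required by validity, and the single step gives Continuity, with $\op_1<_\EX\op_2$ because $\op_1$ produces $ver$ and $\op_2$ produces a tag of strictly larger integer component.

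Finally I would verify \emph{Consolidation} and \emph{Evolution}. For Consolidation, let $\op_1=\trw{*}{ver_i}$ and $\op_2=\trw{ver_j}{*}$ be successful with $\op_1\bef_{\hist{\EX}}\op_2$; since $\op_1$ finishes propagating $ver_i$ to a majority before $\op_2$ is invoked, the query majority of $\op_2$ meets that set, so $\op_2$ discovers a tag $\tau\ge ver_i$ (replica tags never decrease), and successfulness forces $ver_j=\tau\ge ver_i$, while $\op_1<_\EX\op_2$ follows from properties A1 and A2 of atomicity (equivalently, from $\op_2$ producing a tag larger than $ver_j\ge ver_i=\mathrm{tag}(\op_1)$). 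For Evolution, each successful step $\trw{ver'_t}{ver'_{t+1}}$ increases the integer component by exactly one (the parameter equals the discovered tag $ver'_t$, and the produced tag has integer component one larger), so along two sequences from the common $ver$ the integer component of $ver'$ is that of $ver$ plus $k{-}1$ and that of $ver''$ is that of $ver$ plus $\ell{-}1$; since $k<\ell$, we get $ver'<ver''$. The main obstacle I anticipate is the atomicity step: making the reduction to \mwABD{} airtight and confirming that the tag order it yields is the one meant by ``the total order imposed by A2'', after which the three coverability properties reduce to the quorum-intersection and tag-arithmetic observations above.
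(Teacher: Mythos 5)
Your proposal is correct and follows essentially the same route as the paper's proof: map unsuccessful writes to \mwABD{} reads and inherit atomicity, establish validity via tag uniqueness and the query-then-propagate structure, prove consolidation by majority intersection forcing $ver_j=\tau\geq ver_i$, continuity by tracing the discovered tag back to a producing write or $ver_0$, and evolution by the fact that each successful write increments the integer component of the tag by exactly one. Your version is, if anything, slightly more careful on the reduction to \mwABD{} and on the traceback argument for validity/continuity, which the paper states more informally.
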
 

\begin{proof}
It is clear that \rm{v}\mwABD{} still satisfies properties {\bf A1-A3}. Any write operation 
that is not successful can be mapped to a read operation that performs two phases and propagates the latest
value/version of the register to a majority of replicas 
before completing.  
It remains to show that 
\rm{v}\mwABD{} also satisfies 
the properties of validity and weak coverability. 

\emph{Validity} is satisfied since each tag is unique, as it is composed by 
an integer and the id of a process. The tag is monotonically incrementing
at each replica, as according to the algorithm a replica updates its local 
copy only if a higher tag is received. A writer process discovers the maximum 
tag $maxtag$ among the replicas and in the second phase it generates a tag $\tup{maxtag+1,wid}$. 
As the tag at each replica is monotonically incrementing then each writer never 
generates the same tag twice. Also, for every write 
$\trw{\tg{}}{\tg{}',chg}$, $\tg{}'=\tup{\tg{}.ts+1, wid} \Rightarrow \tg{}'>\tg{}$. 
Finally, since every tag is generated by extending the initial tag and each 
write operation extends a tag that obtains during its query phase then 
there is a sequence of tags leading from the initial tag to the tag used
by the write operation.

For \emph{consolidation} we need to show that for two write 
operations $\wrt_1=\trw{*}{\tg{1},chg}$ and $\wrt_2=\trw{\tg{2}}{*,chg}$,
if $\wrt_1\bef_{\EX}\wrt_2$ then $\tg{1} \leq \tg{2}$.  
According to the algorithm $\wrt_1$ propagates $\tg{1}$ to the 
majority of replicas before completing. 
In the query phase, $\wrt_2$ receives messages from the 
majority of replicas. So there is one replica $s$ that received
$\tg{1}$ from $\wrt_1$ before replying to $\wrt_2$. Since 
the $\tg{}$ in $s$ is monotonically incrementing, then $s$ 
replies to $\wrt_2$ with a tag $\tg{s}\geq\tg{1}$.
So $\wrt_2$ receives a $maxtag\geq \tg{1}$.
Since $\wrt_2$ also changes the value and version of the 
register it means that its local tag $\tg{2}$ is equal to $maxtag$. 
This shows immediately that $\tg{2} \geq \tg{1}$, completing the proof.

\emph{Continuity} is preserved as a write operation first queries the 
replicas for the latest tag before proceeding to the propagation phase 
to write a new value. Since the tags are generated and propagated only 
by write operations then if a write changes the value of the system then 
it appends a tag already written, or the initial tag of the register.

Finally, to show that \emph{evolution} is preserved, we observe that the version of a register is given by
its tag, where tags are compared lexicographically (first the number $\tg{}.ts$ and then the writer identifier to break ties).
A successful write $\op_1=\trw{\tg{}}{\tg{}'}$ generates a new tag $\tg{}'$ from $\tg{}$ such
that $\tg{}'.ts = \tg{}.ts + 1$. Consider sequences of tags $\tg{1}, \tg{2},\ldots, \tg{k}$
and $\tg{1}', \tg{2}',\ldots, \tg{\ell}'$ such that $\tg{1}=\tg{1}'$.
Assume that $\trw{\tg{i}}{\tg{i+1}}$, for $1\leq i<k$, and $\trw{\tg{i}'}{\tg{i+1}'}$, for $1\leq i<\ell$,
are successful write operations.
If $\tg{1}.ts=\tg{1}'.ts=z$, then $\tg{k}.ts=z+k$ and $\tg{\ell}'.ts=z+\ell$, and if $k < \ell$ then $\tg{k} < \tg{\ell}'$.\vspace{.3em}
\end{proof}
%

\remove{
For \emph{continuity} we need to show that for two write 
operations $\wrt_1=\trw{\tg{1}}{*,chg}$ and $\wrt_2=\trw{\tg{2}}{*,chg}$,
if $\wrt_1\bef_{\EX}\wrt_2$ then $\tg{1} < \tg{2}$.  
Since $\wrt_1$ completes before $\wrt_2$ then there exists 
a majority of replicas that have a tag at least as large as $\tg{1}$.
Since, however, $\wrt_1$ is successful then in its second phase 
it propagates a tag $\tg{1}+1> \tg{1}$. 
So $\wrt_2$ receives a $maxtag\geq \tg{1}+1 > \tg{1}$.
Since $\wrt_2$ also changes the value and version of the 
register it means that its local tag $\tg{2}$ is equal to maxtag. 
This shows immediately that $\tg{2} > \tg{1}$, completing the proof.

\emph{Evolution}
requires that if two writers $w_i$ and $w_j$ invoke $\wrt_1=\trw{\tg{1}}{*,chg}$ and $\wrt_2=\trw{\tg{2}}{*,chg}$,
and $\tg{1}<\tg{2}$, then $\wrt_1$ appears before $\wrt_2$ in the total order.
According to the algorithm, if $\tg{1}=\tup{z,*}$, then $\wrt_1=\trw{\tup{z,*}}{\tup{z+1,w_i},chg}$.
We have two cases to consider: (i) either $\tg{2}=\tup{z+1,w_i}$, or (ii)   $\tg{2}>\tup{z+1,w_i}$.
}
 

\remove{
\emph{Consolidation} is
satisfied if write operations either extent th . 
Let a write $\wrt_1 = \trw{v_1, ver}{v_1, \tg{1},chg}$ be
a successful write operation that writes $v_1$ 
with a version $\tg{1}$ and is not 
concurrent with any other operation. 
During its second phase $\wrt_1$ communicates with a majority of replicas. 
Since the tag of the replicas increments monotonically, then the tag of any replica that received a 
write message from $\wrt_1$ is equal to $\tg{1}$ when $\wrt_1$ completes. 
So any succeeding \emph{write} $\wrt_2$ discovers a maximum tag greater or equal than $\tg{1}$ during its query phase as it communicates with a majority of replicas.  
If the maximum tag is greater 
than $\tg{1}$ it follows that some \emph{write} that succeeded $\wrt_1$ revised $v_1$ and the version $\tg{1}$.
If the maximum tag is equal to $\tg{1}$ then 
there are two cases for 
the local tag $\tg{2}$ of $\wrt_2$: 
(a) $\tg{2} = \tg{1}$ , or (b) $\tg{2} < \tg{1}$. If (a) 
is true then the process invoking $\wrt_2$ already read $\tg{1}$ and wants to associate $v_2$ with version $\tg{2}+1=\tg{1}+1$.
If (b) holds then $\wrt_2$ has an outdated
version and acts as a read.\vspace{.2em}
}

\noindent{\bf Supporting Large Versioned Objects.}
Fan and Lynch~\cite{FL03}, using algorithm \mwABD\ as a building block, 
showed how large atomic R/W objects can be efficiently replicated. The main idea 
of their algorithm, called LDR, is to have
two distinguished sets of servers: Replicas and Directories. Replica servers
are the ones that actually store the object's data (value), while Directories keep track
of the tags of the object and the associated Replicas that store the data of
the object. A reader or writer first runs algorithm \mwABD\ on the Directories to
obtain the highest tag of the object, and the identity of the Replicas that have the associated
value (aka, the most recent value of the object). A read operation, then
contacts a subset of the Replicas to obtain the value of the object. 
A write sends the new value to a majority of the Replicas, while ensuring
that Directories are updated (see~\cite{FL03} for details). 
By replacing algorithm \mwABD\ with algorithm v\mwABD\ and performing a few modifications
to the Replicas, we can turn algorithm LDR into an algorithm that can handle
{\em large versioned R/W objects}, such as large files. 
See Appendix~\ref{appx:files} for the modified LDR. \vspace{-.5em}

\remove{
\paragraph{Avoiding starvation.} Notice that 
two concurrent write operations $\wrt_1$ and $\wrt_2$ may witness the same 
maximum tag if they invoke their \emph{query} 
phase at the same time. Notice that at the end
of their \emph{query} phases they both generate
a new tag $\tup{t,\wrtr_1}$ and $\tup{t,\wrtr_2}$
for $\wrt_1$ and $\wrt_2$ respectively. As the integer part of the tag is the same, these write 
operations will be ordered in terms of their identifier: for instance the write operations coming 
from the writer with the highest identifier wins. 
In the case of updates the above scenario may lead two write operations to obtain the same version 
of the file. If both try to write their own version 
in a second phase then only the writer with the 
highest tag will succeed. This however may lead to 
the \emph{starvation} of a process with a lower identifier.

To solve the problem we propose an enhancement on the
tags used in the system. In particular we add a new 
field in the tag that is a counter on how many times 
a writer succeeded to write on the shared object. 
Thus the tag becomes a triplet of the form $\tup{t,wid,sc}\in\Nat\times\wSet\times\Nat$. We say
that a tag $\tau_1$ is greater than a tag $\tau_2$
if: (i) $\tau_1.t > \tau_2.t$, or (ii) $\tau_1.t= \tau_2.t$ and $\tau_1.sc < \tau_2.sc$, or (iii) 
$\tau_1.t = \tau_2.t$, $\tau_1.sc = \tau_2.sc$ and
$\tau_1.wid > \tau_2.wid$. Using this tagging scheme
the algorithm gives priority to the write operations
with the less write successes in case of a write 
collision. So a writer with a small identifier 
will eventually be able to escape starvation and 
change the value of the shared object.
}

\section{Conclusion}
\label{sec:conclude}

In this paper we have introduced {\em versioned registers} and a new property 
for concurrent versioned registers, we call \emph{coverability}. 
A versioned register associates a version with its value, and with each operation that wants to modify its value.
An operation may modify the value and the version of the register, or it may just retrieve its value-version pair.  
Coverability defines the exact guarantees that a versioned register 
provides when it is accessed concurrently by multiple processes with respect to the evolution of its versions,
over a total order of its operations.
We introduce two levels of coverability: \emph{strong} and \emph{weak}. Strong coverability 
requires that only a single operation modifies each version of the register, whereas weak coverability
is more relaxed allowing multiple concurrent operations to modify the same version.

We combine coverability with atomicity to obtain (strongly/weakly) coverable atomic registers. 
The successful writes on the register follow the total order of atomicity, while preserving the
properties required by coverability. We note that a different total ordering could be used with
coverability to obtain other types of ``coverable objects". In fact, we believe it would be interesting 
to investigate further the use of coverable objects for the introduction of distributed algorithms 
for various applications. The fact that each operation is enhanced by the version of the object provides 
the flexibility to manipulate the effect of an operation under some conditions on 
the version of the object with respect to the version of the operation.

\newpage

\remove{
In this paper we define a new register object type we call \emph{versioned register} 
and a new property for concurrent versioned registers, we call \emph{coverability}. 
A versioned register associates a version with its value, and with each operation that wants to modify its value.
An operation may modify the value and the version of the 
register, or it may just retrieve its value-version pair.  
Coverability defines the exact guarantees that a versioned register 
provides when it is accessed concurrently by multiple processes with respect to the evolution of its versions.
We introduce two levels of coverability: \emph{strong} and \emph{weak}. Strong coverability 
requires that only a single operation modifies each version of the register, whereas weak coverability
is more relaxed allowing multiple concurrent operations to modify the same version.

We have shown that strong coverability is equivalent with 
consensus. Hence, for the most of the paper we focused on the uses of weakly coverable atomic R/W registers.  
Weakly coverable atomic R/W registers 
are interesting in their own right, as they differ from atomic R/W
registers, as well as from specialized register types like \emph{ranked-registers}. 
Also, they can be  used to implement weak RMW objects 
and concurrent file objects. 
Weak RMW objects are of interest to applications that require read-modify-write 
semantics, demand fault-tolerance, and have lower operation contention. Concurrent file objects
might be used from applications to allow high degree of collaboration, providing at the 
same time provable consistency guarantees on the values of the file. 
Further to their applicability, we demonstrate that it is relatively simple to enhance 
existing regular distributed atomic R/W registers to provide coverability guarantees.
As an example, we show how one can obtain MWMR weakly coverable atomic registers by modifying 
the multi-writer variant of ABD~\cite{ABD96}. 
We prove that the modified algorithm satisfies both the coverability and linearizability 
properties, while at the same time tolerates-crash failures, and operates in an
asynchronous, message-passing environment.

It would be interesting to investigate further the use of coverable objects 
for the introduction of distributed algorithms for various applications. 
The fact that each operation is enhanced by the version of the object provides 
the flexibility to manipulate the effect of an operation under some conditions on 
the version of the object with respect to the version of the operation.
}

\bibliographystyle{acm}
\bibliography{biblio}

\newpage
\appendix
\section*{Appendix}

\section{Impossibility of Implementing Weakly CoVerable 
Registers using Ranked Registers}
\label{appx:rr}
In this section we provide the proofs to the lemmas presented 
in Section \ref{ssec:tr-vs-rr}. Before proceeding to the proofs 
let us introduce some notation we use throughout this section. 

Let $R$ be a set of ranked registers. Let $R_x\subseteq R$ denote the set 
of ranked registers on which a process $\pr_i$ 
performs a $\rrw{r}{*,*}_{\pr_i,*}$ during a coverable write operation $\op_x$
in an execution $\EX$. $R_x=cR_x\cup aR_x$, where $cR_x$ is the set of ranked 
register such that $\pr_i$ performs a $\rrw{r}{*,commit}_{\pr_i,*}$ that commits 
during $\op_x$, and 
$aR_x$ the set of ranked register such that $\pr_i$ performs 
a $\rrw{r}{*,abort}_{\pr_i,*}$ that aborts during $\op_x$. 
For any pair of write operations $\op_x\bef_{\EX}\op_y$,
let the set $R_{x,y} = R_x \cap R_y$ be the set of ranked registers such that 
both $\op_x$ and $\op_y$ perform a ranked write. We finally denote by
$cR_{x,y} = cR_x \cap R_y$ and $aR_{x,y} = aR_x \cap R_y$, the 
set of registers where $\pr_i$ committed (or aborted resp.) during
$\op_x$ and they were also written during operation $\op_y$.\vspace{1em}


\noindent {\bf Proof of Lemma \ref{lem:commit}.}
Let the weakly coverable register be implemented by $k$ 
ranked registers each with a highest rank $r_1, r_2,\ldots,r_k$ 
respectively at the end of some execution fragment $\EX$. 
For the rest of the proof we will construct extensions of $\EX$.  
Also, let the state of the coverable object be $(v, ver)$ at the 
end of $\EX$.

Assume to derive contradiction that we extend $\EX$ 
with a write operation $\wrt_1 = \trw{ver}{ver',chg}_{\pr_i}$
that revises the coverable register, and 
all the write operations performed during $\wrt_1$ on the ranked 
registers abort. From that it follows that for each 
write operation $\rrw{r}{r_j,abort}_{\pr_i,j}$ performed 
by $\pr_i$ on some register $j$, $r_j > r$. 
Let the new execution be $\EX'$. 

We extend $\EX'$ with another write operation 
$\wrt_2 = \trw{ver''}{ver''',chg}_{\pr_z}$ by process $\pr_z$ to obtain 
execution $\EX''$. \nn{Since $\wrt_1\bef_\EX\wrt_2$ then by \emph{consolidation},
$ver''\geq ver'$, and $\wrt_1<_\EX\wrt_2$. Moreover, since $\wrt_1$ is not concurrent 
with any other operation, then 
by consolidation $ver'$ is the largest version introduced in $\EX$. 
Since, by \emph{continuity}, 
$ver''$ has to be equal to a version introduced by a preceding
operation, and since $ver''\geq ver'$ (the largest version introduced), then  
$\wrt_2$ revises $ver'' = ver'$ to a newer version $ver'''$.}
Note however that for any write operation $\rrw{r'}{r_j,*}_{\pr_z, j}$ 
performed on any of the ranked registers, for $1\leq j\leq k$,
the highest rank for $j$ at the time of the write was $r_j$. 

Finally consider the execution $\Delta\EX''$ that is similar to 
$\EX''$ without containing $\wrt_1$. In other words $\Delta\EX''$
extends $\EX$ with the write operation $\wrt_2$. Observe that any 
write operation $\rrw{r'}{r_j, *}_{\pr_z, j}$ performed by $\pr_z$ on 
ranked register $j$ during $\wrt_2$ observes a highest rank $r_j$ 
as in $\EX''$. So if such a write committed (or aborted) in $\EX''$ 
will also commit (or abort) in $\Delta\EX''$ as well. Therefore,
since $\wrt_2$ revised the value of the coverable register in $\EX''$ 
will revise the value of the coverable register in $\Delta\EX''$ 
as well. \nn{However, the last proceeding write operation is of the 
form $\trw{*}{ver,*}$ for $ver\neq ver'$. 
%
Thus $\Delta\EX''$ violates the \emph{continuity} property and hence contradicts our 
initial assumption.}\hfill$\square$\vspace{1em}

\remove{
\noindent{\bf Direction 2:}
Here we need to show that if a write operation 
$\rrw{r}{*,commit}_{{\pr_i},j}$ on some ranked register $j$ commits, then the write 
$\wrt_1 = \trw{ver}{ver',chg}_{\pr_i}$
revises the value of the coverable register. 
Let us consider the negation of the above sentence.
In other words, if all the write operations performed 
on the ranked register during $\wrt_1$ abort, then $\wrt_1$
does not revise the value of the register. 
Assume to derive contradiction that if all the 
write operations on the rank register abort, then 
$\wrt_1$ revises the value of the coverable register. 

We extend now $\EX$ with $\wrt_1$ to obtain $\EX'$.
Since all the write operations 
performed during $\wrt_1$ abort it follows that 
$\forall \rrw{r}{r_j,abort}_{{\pr_i},j}$, that performed
during $\wrt_1$, $r_j>r$, where $r_j$ the highest rank 
of register $j$. Assume that the last state of $\EX'$ 
is associated with version $ver'$, with $ver'\neq ver$.

Let now extend $\EX'$ with another write operation 
$\wrt_2=\trw{ver}{*}_{\pr_z}$. Since $ver\neq ver'$ then 
by coverability $\wrt_2$ does not revise the value 
of the coverable register. So, by our assumption 
there must exists some ranked register, say $j\in[1,k]$, 
such that the write operation $\rrw{r}{*,commit}_{\pr_z, j}$ 
commits. Note that for the highest rank on the ranked register
$j$ when this write is executed is $r_j$. 

Consider now an execution $\Delta\EX''$ that is similar 
to $\EX''$ but it does not contain $\wrt_1$. In other 
words we obtain $\Delta\EX''$ by extending 
$\EX$ with the write operation $\wrt_2$. Notice 
that every write operation $\rrw{r}{r_j,*}_{\pr_z,j}$
observes the same highest rank $r_j$ as in $\EX''$.
Thus, if a write $\rrw{r}{r_j,*}_{{\pr_z},j}$ committed 
in  $\EX''$, commits in $\Delta\EX''$ as well. Thus,
$\wrt_2$ cannot distinguish $\Delta\EX''$ from $\EX''$ 
and hence it does not revise the value in $\Delta\EX''$ either. 

However $\wrt_2$ is not concurrent with any other operation 
and uses a version that is equal to the version of the last 
state in $\EX$, $ver$. By coverability transitions, 
such write operation should have revised the value of 
the object and this leads us to contradiction.
}



\remove{
\begin{proof}[Proof of Lemma \ref{lem:rrabort}]
We will assume to derive contradiction that $\op_2$ does not write on any 
ranked register that $\pr_i$ wrote (and committed) during $\op_1$. 
More formally, let $R_1$ be the set of ranked registers
s.t. for all $j\in R_1$, $\rrw{r}{*,commit}_{\pr_i,j}$ for some rank 
$r$ during $\op_1$.  Let $R_2$ be the set of ranked registers 
s.t. for all $q\in R_2$, $\pr_z$ invokes $\rrw{r'}{*,abort}_{\pr_z,q}$ 
during $\op_2$. Note that since $\op_2$ does not change the version 
of the object, then according to Lemma \ref{lem:commit}, 
no ranked register commits during $\op_2$ and hence $R_2$ contains
all the ranked registers that $\pr_z$ tried to write to.  
According to our assumption $R_1\cap R_2 = \emptyset$. 

Let us construct an execution that contain the two operations $\op_1$ and $\op_2$.
Consider an execution fragment $\EX$ that ends with a state associated with a version $ver$. 
Let us assume that there exists an algorithm that uses $k$ ranked registers 
each with a highest rank $r_1, r_2,\ldots,r_k$ respectively at the end of $\EX$. 
We extend $\EX$ with operation $\op_1$ and obtain $\EX_1$. Since $\op_1$ 
changes the version of the object, according to Lemma \ref{lem:commit},
there exists a ranked register $j\in R_1$ such that, 
$\rrw{r}{*,commit}_{\pr_i,j}$ during $\op_1$. 

Next we extend $\EX_1$ by $\op_2$ and obtain $\EX_2$. 
Notice that $\op_2$ does not change the value of the coverable 
register and thus, by Lemma \ref{lem:commit}, for all the ranked 
registers $j\in R_2$, $\rrw{r'}{r_h,abort}_{\pr_z,j}$. Since 
according to our assumption, $R_1\cap R_2=\emptyset$, then it 
must be the case that the highest rank observed by $\op_2$ in any
$j\in R_2$ is $r_h=r_j$, i.e. the highest rank of $j$ at the end of $\EX$.
That includes also the ranked registers that $\pr_i$  tried to modify and 
aborted during $\op_1$.

Consider now the execution $\Delta\EX_2$ which is obtained 
by extending $\EX$ with $\op_2$. Notice that since $\pr_z$
does not communicate with $\pr_i$ then it communicates with 
the same set $R_2$ of ranked registers. Every write operation 
on the ranked registers $j\in R_2$ return the same highest
rank as in $\EX_2$. So $\op_2$ cannot distinguish $\EX_2$ 
from $\Delta\EX_2$ and thus returns the same version of the object. 
However, this violates the validity condition of
weak coverability since $\op_2$ returns a version $ver_1$ 
that is not yet reached by the object. Thus, this contradicts 
our assumption and the lemma follows. 
\end{proof}
}

%

\noindent {\bf Proof of Lemma \ref{lem:rrcommon}.}
We will assume to derive contradiction that $\op_2$ does not write on any 
ranked register that $\pr_i$ wrote (and committed) during $\op_1$. 
More formally, let $cR_1$ be the set of ranked registers
s.t. for all $j\in cR_1$, $\rrw{r}{*,commit}_{\pr_i,j}$ for some rank 
$r$ during $\op_1$.  Let $R_2$ be the set of ranked registers 
s.t. for all $q\in R_2$, $\pr_z$ invokes $\rrw{r'}{*,*}_{\pr_z,q}$ 
during $\op_2$. Note that since $\op_1$ 
revises the version of the object, 
then according to Lemma \ref{lem:commit}, $|cR_1| \geq 1$.  
According to our assumption $cR_1\cap R_2 = \emptyset$. 

Let us now construct an execution that contains the two operations $\op_1$ and $\op_2$.
Consider an execution fragment $\EX$ that ends with a state associated with a version $ver$. 
Let us assume that there exists an algorithm $A$ that uses $k$ ranked registers 
each with a highest rank $r_1, r_2,\ldots,r_k$ respectively at the end of $\EX$. 
We extend $\EX$ with operation $\op_1$ and obtain $\EX_1$. Since $\op_1$ 
changes the version of the object, by Lemma \ref{lem:commit},
there exists a ranked register $j\in cR_1$ such that, $\pr_i$ invokes an operation that commits on $j$, 
$\rrw{r}{*,commit}_{\pr_i,j}$, during $\op_1$. 

Next we extend $\EX_1$ by $\op_2$ and obtain $\EX_2$. 
Since according to our assumption, $cR_1\cap R_2=\emptyset$, then it 
must be the case that the highest rank observed by $\op_2$ in any
$j\in R_2$ is $r_j$, i.e. the highest rank of $j$ at the end of $\EX$.
So it returns either $r_j$ or $r'$ the rank used by $\pr_z$.
That includes also the ranked registers that $\pr_i$  tried to modify and 
aborted during $\op_1$. 


Consider now the execution $\Delta\EX_2$ which is similar to $\EX_2$,
without operation $\op_1$. In particular, $\Delta\EX_2$ is obtained 
by extending $\EX$ with $\op_2$. Notice that since $\pr_z$
does not communicate with $\pr_i$, then $\pr_z$ appears in the 
same state in both $\EX_2$ and $\Delta\EX_2$ before invoking $\op_2$. 
Thus, $\pr_z$ attempts to write on  
the same set of ranked registers $R_2$ in both executions. 
Since $\EX$ is extended by $\op_2$ alone, then any write operation 
on the ranked registers $j\in R_2$ is $r_j$ (as in $\EX_2$). 
So $\op_2$ cannot distinguish $\EX_2$ 
from $\Delta\EX_2$ and thus revises $ver_1$ in $\Delta\EX_2$ as well. 
However, $\Delta\EX_2$ does not contain a $\trw{*}{ver_1,chg}$ operation, therefore $\op_2$ violates the \emph{continuity}
property of weak coverability. 
This contradicts our assumption. 
\hfill$\square$\vspace{1em}

\noindent {\bf Proof of Lemma \ref{lem:rrorder}.}
	Consider again an execution fragment $\EX$ that ends with a state associated with a version $ver$. 
	Let us assume that there exists an algorithm $A$ that uses a set $|R|=k$ of ranked registers 
	each with a highest rank $r_1, r_2,\ldots,r_k$ respectively at the end of $\EX$. 
	We know by Lemma \ref{lem:commit}, that each operation $\op_x$ that changes 
	the version of the weakly coverable register performs a write 
	that commits on at least a single ranked register in $R$. 
	
	We extend $\EX$ with the following non-concurrent operations (listed in the order they take place) 
	to obtain execution $\EX_1$:
	\begin{itemize}
		\item operation $\op_1 = \trop{ver}{ver_1,chg}_{\pr_1}$
		\item operation $\op_2 = \trop{ver_1}{ver_2,chg}_{\pr_2}$
	\end{itemize}

	By Lemma \ref{lem:rrcommon}, $cR_{1}\cap R_2\neq\emptyset$.
	Assume to derive contradiction that $\forall j \in cR_1\cap R_2$, $\pr_1$
	performs a committed write with a rank $r_{\op_1}>r_j$ and $\pr_2$ performs a write 
	with a rank $r_{\op_2} < r_{\op_1}$  (that may commit or not). 
	Since according to our assumption, 
	$\forall j \in cR_1\cap R_2$, the rank of $\pr_2$ has to be smaller than 
	the rank used by $\pr_1$, we assume w.l.o.g. that $\pr_2$ uses the same 
	rank $r_2$ for all the ranked writes. 

	By the order of operations in $\EX_1$ it follows that for all $j\in cR_{1}\cap R_2$, 
	$\rrw{r_{\op_1}}{r_j,commit}_{\pr_1,j}$ appears before $\rrw{r_{\op_2}}{r_1,*}_{\pr_2,j}$
	in $\EX_1$. Moreover, observe that, by Definition \ref{def:rr}, for each register $i\in R_1\setminus cR_1$,
	$r_i>r_{\op_1}$ since the write from $\op_1$ aborted. 
	Since $\op_2$ changes 
	the version of the weakly coverable register, then by Lemma \ref{lem:commit}, $cR_2\neq\emptyset$. 
	Notice that, even though we assume that $r_2 <  r_1$, 
	the operations in $\EX_1$ may commit
	without violating the ranked register properties of Definition \ref{def:rr} (as a write operation
	with a smaller rank does not have to abort). In order to preserve 
	weak coverability, $\op_2$ changes the version $ver_1$ to $ver_2$. 
	
	Consider now the execution $\Delta\EX_1$ that contains the same 
	operations but with $\op_1$ and $\op_2$ in reverse order. 
	In particular $\Delta\EX_1$ extends $\EX$ with operations:
	\begin{itemize}
		\item operation $\op_2 = \trop{ver_1'}{ver_2',chg}_{\pr_2}$
		\item operation $\op_1 = \trop{ver}{ver_1,chg}_{\pr_1}$
	\end{itemize}  
Since there is no communication assumed between the processes then $\op_2$ uses 
rank $r_{\op_2}$ in $\Delta\EX_1$ as well. It is easy to see that for any register $i\in R_2\setminus R_1$,
$\op_2$ observes the same highest rank $r_i$ in both executions $\EX_1$ and $\Delta\EX_1$.
So if the rank write of $\op_2$ on those registers commits in $\EX_1$ then it also commits in $\Delta\EX_1$.
So the only registers that may allow $\op_2$ to differentiate between the two executions are the 
ones in the intersection $cR_1\cap R_2$. There are two cases to consider: (i) $\forall j\in cR_1\cap R_2$,
$r_j > r_{\op_2}$, and (ii) $\exists j\in cR_1\cap R_2$, and $r_j\leq r_{\op_2}$. 

\case{(i)} In case (i), $\op_2$ witnesses
a higher rank from all the registers in $cR_1\cap R_2$ as in $\EX_1$. So for each register $j\in cR_1\cap R_2$,
if $\rrw{r_{\op_2}}{r_j,*}_{\pr_2,j}$ committed in $\EX_1$ then the write commits in $\Delta\EX_1$ as well.
Thus, $\op_2$ will not be able to distinguish the two operations and it extends $ver_1' = ver_1$ in 
$\Delta\EX_1$ as well. However, $ver_1$ is not written in $\Delta\EX_1$, thus $\op_2$ violates \emph{continuity}
property and contradicts our assumption. 

\case{(ii)} So it remains to examine the second case were $\exists j\in cR_1\cap R_2$, and $r_j\leq r_{\op_2}$.
In this case $\rrw{r_{\op_2}}{r_j,*}_{\pr_2,j}$ has to commit in $\Delta\EX_1$. If the same operation committed 
in $\EX_1$ as well then $\op_2$ cannot distinguish the two executions and thus violates coverability as shown before. 
Let us assume that $\op_2$ did not commit in $\EX_1$. Hence, $\op_2$ distinguishes $\Delta\EX_1$ from $\EX_1$.
To preserve weak coverability, $\op_2$ has to extend version $ver_1'= ver$ to a version $ver_2' > ver$. At the end 
of $\op_2$ the highest rank of $r_j = r_{\op_2}$. When $\op_1$ is invoked it performs rank writes using rank 
$r_{\op_1}$, since there is no communication between the processes. Since, according to our assumption 
$r_{\op_2} < r_{\op_1}$, it follows that $\rrw{r_{\op_1}}{*,*}_{\pr_1,j}$ commits in 
both $\EX_1$ and $\Delta\EX_1$. Moreover, since for all the rest registers $i\in cR_1$, $r_i > r_{\op_2}$,  
$\op_1$ will witness the same highest rank $r_i$ from each of those registers, in both executions. Thus, 
all the write operations on those registers $\op_1$ will commit on all those registers, and thus, $\op_1$ 
will not be able to distinguish $\Delta\EX_1$ from $\EX_1$. Since, however ,it extended $ver$ in $\EX_1$,
then it extends $ver$ in $\Delta\EX_1$ as well. However, as $\op_2\bef_\EX \op_1$,  
then by consolidation, $\op$ needs to extend a version larger or equal to $ver_1'$. Since $ver < ver_1'$ 
then consolidation is violated.  And this completes the proof.
\hfill$\square$ 
\remove{	
	In $\EX_1$ there are three cases to consider for any ranked register 
	$j\in R_{1,3}\cup R_{2,3}$: 
	(i) $j$ was written during $\op_1$ and $\op_3$, (ii) $j$ was written during 
	$\op_2$ and $\op_3$, or (iii) $j$ was written during all three operations. 
	Let us examine a write operation $\rrw{r_3}{r_h,*}_{\pr_3,j}$	 on some ranked 
	register $j$. If $j$ falls in Case (i) then $\pr_3$ observes either 
	$r_1$ or $r_j$ whichever is higher. If $j$ falls in Case (ii) then it observes either $r_2$ or 
	$r_j$. Since however n case (ii) $j$ was written only by $\op_2$ and $\op_3$.
	Thus, $j\notin cR_1$, and thus according to our assumption $r_2, r_3 < r_j$. 
	So $r_j$ is not changed according to Definition \ref{def:rr} during $\op_2$,
	and $\op_3$ observes $r_j$ in this case. Finally if $j$ was written by all operations,
	then since $r_1 > r_2 > r_3$, $r_3$ observes either $r_1$ or $r_j$. So $\pr_3$ 
	does not observe $r_2$ in any case. 
	
%
	
	Since, $\pr_3$ does not communicate with any other process in either $\EX_1$ or $\Delta\EX_1$, 
	then it should attempt to write to the same set of ranked registers $R_3$ in $\Delta\EX_1$ as well, 
	using the same ranks as in $\EX_1$. Notice that the commits of $\pr_2$ are the ones that 
	differentiate $\EX_1$ from $\Delta\EX_1$. Since, however $\pr_3$ did not observe $r_2$ in $\EX_1$ then 
	it will not observe $r_2$ in $\Delta\EX_1$ either for any $j\in R_3$.  
%
 	Thus, will not be able to distinguish the two 
 	executions. Since, $\op_3$ changes version $ver_2$ to $ver_3$ in $\EX_1$ it will do so
 	in $\Delta\EX_1$ as well. However $ver_2$ was not introduced in $\Delta\EX_1$ violating 
 	this way the validity condition of coverability. Thus, $\pr_2$ has to commit to at least 
 	a single register with a higher rank than $r_1$.
 	
	
%
}


\section{Strong Coverability vs Consensus}
\label{appx:consensus}
Consensus \cite{Lynch1996} is defined as the problem where a set of fail-prone processes
try to agree on a single value for an object. A consensus protocol
must specify two operations: (i) $propose(v)_{\pr_i}$, used by the process $\pr_i$ to propose
a value $v$ for the object, and (ii) $decide()_{\pr_i}$, used by the process $\pr_i$ to decide 
the value of the object. Any implementation 
of consensus must satisfy the following three properties: 
\begin{itemize}[leftmargin=5mm]\itemsep2pt
	\item 
{\bf (1) CTermination:} Every correct process decides a value;
	\item 
	{\bf (2) CValidity:} Every correct process decides at most 
	one value, and if it decides some value $v$, then $v$ must have
	been proposed by some process;  
	\item 
	{\bf (3) CAgreement:} All correct process must decide the same value.
\end{itemize}

We show that a strongly coverable atomic register 
 is equivalent to a consensus object. To support this statement
we first present an implementation of a consensus object using 
a strongly coverable register, and then we describe an
implementation of a strongly coverable 
register assuming the existence of a consensus object. 
In the implementation of consensus that follows we assume that all the 
processes propose a value and they decide by the end of the propose operation. 
Thus we combine the two actions in one operation.
Figure \ref{fig:consensus}
presents the pseudocode of the implementation of a consensus object 
using a strongly coverable atomic register.  

\begin{figure}[!ht]
	
	\hrule\vspace{0.15cm}
	\begin{footnotesize}
	At each process $i\in\idSet$\\
	Local Variables: $lcver\in Versions, lcval\in Values, flag\in\{chg, unchg\}$\\
	
	{\bf function} {\sc Propose}($v$)
	\begin{algorithmic}
		\State $lcval\gets v$ 
		\State $(lcval, lcver, flag) \gets \act{cvr-write}(lcval,ver_0)$
		\State return $lcval$
	\end{algorithmic}

\end{footnotesize}
	\hrule\vspace{.4em}
	\caption{Consensus using Strongly Coverable Atomic Registers}\vspace{-1em}
	\label{fig:consensus}
\end{figure}

We assume that $ver_0$ is the initial 
version of the coverable register. When each process begins
executing the algorithm it issues a write operation trying to revise 
$ver_0$ and propose its own local value as the value to be decided. 
According to strong coverability only a single write operation 
$\act{cvr-write}(v,ver_0)(v, ver_1, chg)$ is going to succeed proposing its value, say $v$, 
and change the version of the register from $ver_0$ to some version $ver_1$. 
All the rest of the write operations will be of the form $\act{cvr-write}(v',ver_0)(v, ver_1, unchg)$
and thus will fail to change the value and version of the register. 
The write operation will return 
$(lcval,lcver, flag)=(v,ver_1, unchg)$ no matter what value they tried to propose, 
and each will be able to agree on value $lcval = v$ reaching this way agreement.
This discussion yields the following theorem.
%

\begin{theorem}
The construction in Figure \ref{fig:consensus} implements 
a consensus object.
\end{theorem}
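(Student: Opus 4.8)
The plan is to fix an arbitrary execution $\EX$ of the construction in Figure~\ref{fig:consensus} in which at least one process invokes $\textsc{Propose}$ and completes it (the case where no process proposes is vacuous), and to check the three consensus properties against the guarantees of the underlying strongly coverable atomic register. Since that register is atomic, $\hist{\EX}$ admits a partial order $\prec_{\hist{\EX}}$ satisfying {\bf A1}--{\bf A3}, every register operation in $\EX$ is complete, and by {\bf A2} all \emph{successful} $\act{cvr-write}$ operations --- which are the ``write'' operations in the sense of Definition~\ref{def:atomic} --- are totally ordered by $\prec_{\hist{\EX}}$; the unsuccessful $\act{cvr-write}$ operations play the role of ``reads''.

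The crux is to show that there is exactly one successful write and that every process returns the value that this write installs. First, every process issues $\act{cvr-write}(v_i, ver_0)$, i.e.\ every write targets the initial version $ver_0$. The $\prec_{\hist{\EX}}$-minimal register operation must act on the initial state $(v_0, ver_0)$, and since it is a $\act{cvr-write}$ with $ver_\wrt = ver_0$ equal to the current version, the first transition rule of the versioned register makes it \emph{successful}, revising $ver_0$ to some $ver_1$ and installing its own proposed value. By strong coverability (Definition~\ref{def:strong}) together with validity (Definition~\ref{def:validity}), each version of $\verSet_{\EX}$ is produced by exactly one successful write; as every write targets $ver_0$, no second write can succeed. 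Hence there is a unique successful write $\op^\ast = \trw{ver_0}{ver_1}_{\pr_j}$, it installs $v_j$ (the value proposed by $\pr_j$), and $\verSet_{\EX} = \{ver_0, ver_1\}$. Every other $\act{cvr-write}(v_i, ver_0)$ is unsuccessful, hence of the form $\trw{ver_0}{ver_1, unchg}_{\pr_i}$: by the second transition rule it is invoked in a state whose version differs from $ver_0$, which --- since $\verSet_{\EX}=\{ver_0,ver_1\}$ --- must be $ver_1$, with associated value $v_j$. Reading this unsuccessful write as a ``read'' under {\bf A3}, its returned value equals the value of the last write preceding it in $\prec_{\hist{\EX}}$, which can only be $\op^\ast$; hence it returns $v_j$. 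Thus every process sets $lcval = v_j$ and returns $v_j$.

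Given this, the three properties are immediate. {\bf CTermination}: the only register operation a process performs is one $\act{cvr-write}$, which returns for every correct caller, so $\textsc{Propose}$ returns and the process decides. {\bf CValidity}: every process performs a single $\textbf{return}$, deciding exactly once, and it decides $v_j$, which was proposed by $\pr_j$. {\bf CAgreement}: every correct process decides $v_j$.

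I expect the main obstacle to be the middle step, in particular ruling out that some unsuccessful write returns the initial value $v_0$ --- which would violate both agreement and validity. This requires combining strong coverability (which forces $\verSet_{\EX}$ to contain only $ver_0$ and $ver_1$, so an ``unsuccessful, hence current-version-$\neq ver_0$'' write must observe version $ver_1$) with {\bf A3} and the exact transition semantics of the versioned register. A smaller point is that the stated definition of a coverable atomic register carries no explicit liveness clause, so the termination argument must appeal to the register being an implementable object whose operations return to correct processes.
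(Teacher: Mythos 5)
Your proof is correct and follows essentially the same route as the paper, which justifies this theorem only by the informal discussion preceding its statement: strong coverability forces a unique successful write on $ver_0$, and every other (unsuccessful) write returns that write's value, yielding agreement, validity, and termination. Your version merely makes explicit the appeal to {\bf A3} and to uniqueness of versions, and correctly flags the implicit liveness assumption that the paper glosses over.
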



Figure \ref{fig:strongtr} shows the implementation of a strongly coverable atomic 
register using a consensus object. For our implementation of consensus we assume that
the consensus oracle runs a separate instance of consensus on each version of the 
object. Thus, the oracle accepts as inputs the version we want to revise as well as 
the $\tup{v, ver'}$ tuple that consists of the value we propose. When that value 
is not specified, the oracle returns the tuple decided on the instance associated
with the given version. If no consensus was reached for a given version then 
the oracle returns the tuple $\tup{\bot, \bot}$. To generate a new version a 
process calls the function \act{generate-version(ver)}. This procedure produces 
a unique version larger than any previous version, 
each time is executed. A trivial implementation of this 
function is to append the given version with the unique 
id of the invoking process. 

\begin{figure}[!ht]
	
	\hrule\vspace{0.15cm}
	\begin{footnotesize}
	At each process $i\in\idSet$\\
	Local Variables: $lcver, ver_{new}\in Versions\text{ initially }ver_0$; $lcval \in Values$; $P\in Values \times Versions$
	\begin{multicols}{2}
	{\bf function} \act{cvr-write}($v,ver$)
	\begin{algorithmic}
		\State $ver_{new} \gets \act{generate-version}(ver)$ 
		\State $P \gets \act{propose}(ver, \tup{v, ver_{new}})$
		\If{ $P.ver == ver_{new}$ }
			\State $lcver \gets ver_{new}$
			\State return $\tup{P, chg}$
		\Else
			\While{ $P.ver \neq \bot$ }
				\State $\tup{lcval, lcver} \gets \tup{ P.val, P.ver}$
				\State $P \gets \act{propose}(lcver, \bot)$
			\EndWhile
		\State return $\tup{lcval, lcver, unchg}$
		\EndIf
	\end{algorithmic}
	
	{\bf function} \act{cvr-read}()
	\begin{algorithmic}
		\State $P \gets \act{propose}(\bot, lcver)$
		\While{ $P.ver \neq \bot$ }
			\State $\tup{lcval, lcver} \gets \tup{P.val, P.ver}$
			\State $P \gets \act{propose}(\bot, lcver)$
		\EndWhile
			
		\State return $\tup{lcval, lcver}$\\
	\end{algorithmic}
	\end{multicols}
	\end{footnotesize}
	\hrule\vspace{.4em}
	\caption{Strongly Coverable Atomic Registers using Consensus}\vspace{-1em}
	\label{fig:strongtr}
\end{figure}

\begin{theorem}
The construction in Figure \ref{fig:strongtr} implements 
a strongly coverable atomic register.
\end{theorem}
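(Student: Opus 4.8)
The plan is to show that every execution $\EX$ of the construction in Figure~\ref{fig:strongtr} satisfies (i) atomicity (Definition~\ref{def:atomic}) and (ii) strong coverability (Definition~\ref{def:strong}), where in (ii) the total order $<_\EX$ is the one that property {\bf A2} induces on $\wSet_{\EX,succ}$. The argument hinges on one structural fact about the oracle: it runs a separate consensus instance $C_{ver}$ per version, so by \emph{CAgreement} each $C_{ver}$ has a unique decision $\tup{v,ver'}$ (if any), and since $\act{generate-version}(ver)$ always returns a fresh version strictly larger than $ver$, that $ver'$ is generated inside exactly one $\act{cvr-write}$. Moreover the query-only calls to the oracle (those that supply no proposal) merely read a decision, so decisions are created solely by the initial proposal of a $\act{cvr-write}$. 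Consequently the decided versions form a single chain $ver_0 < ver_1 < ver_2 < \cdots$: $ver_{i-1}$ is the unique version whose instance decides some $\tup{v_i,ver_i}$, and $ver_i=\act{generate-version}(ver_{i-1})>ver_{i-1}$.

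First I would establish, by induction over the prefixes of $\EX$, that every argument $ver$ passed to a $\act{cvr-write}$ already lies on this chain --- it is $ver_0$, or a version returned by an earlier $\act{cvr-read}$ or $\act{cvr-write}$, and those only ever return $ver_0$ or a version decided by some instance (the while-loops walk the chain of decisions and halt at the first still-undecided instance). Thus no off-chain instance is ever used, so each $C_{ver_i}$ has a single ``winner'' $\wrt_{i+1}$, namely the operation whose proposal it decides: $\wrt_{i+1}$ returns flag $chg$ and produces $ver_{i+1}$, whereas every other $\act{cvr-write}$ with argument $ver_i$ enters the else-branch, walks the chain, and returns flag $unchg$ with a version larger than $ver_i$. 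This already yields the first clause of Definition~\ref{def:strong} (consecutive versions, one successful write each) and, for Definition~\ref{def:validity}, strict growth ($ver_i>ver_{i-1}$ for each successful write), uniqueness of versions (distinct $\act{generate-version}$ outputs), and derivation of every reachable version from $ver_0$.

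Next I would build the linearization. For the unique successful write $\wrt_i$ producing $ver_i$, let $d_i$ be the instant $C_{ver_{i-1}}$ decides; the decided value is $\wrt_i$'s own freshly generated proposal, so $d_i$ lies in $\wrt_i$'s interval, and since one needs $ver_i$ even to address $C_{ver_i}$, the instants satisfy $d_1<d_2<\cdots$. Place $\wrt_i$ at $d_i$, and place every $\act{cvr-read}$ (and every unsuccessful $\act{cvr-write}$) that returns version $ver_m$ at the instant it observes $C_{ver_m}$ undecided; that instant follows the one at which it read $\tup{v_m,ver_m}$ from $C_{ver_{m-1}}$ (hence lies after $d_m$) and, by monotonicity of ``decided'', precedes $d_{m+1}$, so it is a point of $(d_m,d_{m+1})$ inside the operation's interval. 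With this order, {\bf A2} is immediate, {\bf A3} holds because the write immediately preceding a read returning $ver_m$ is $\wrt_m$, whose written value is $v_m$, and {\bf A1} holds because every operation's chosen point lies in its interval and the $d_i$ are ordered --- the only thing to verify is that a \emph{completed} $\wrt_j$ forces $C_{ver_0},\dots,C_{ver_{j-1}}$ to have decided before $\wrt_j$ returns, which forbids any operation ordered after it in real time from returning an earlier tip. Taking $<_\EX$ to be this order restricted to $\wSet_{\EX,succ}$, the remaining clauses of Definition~\ref{def:strong} follow: $\wrt_i<_\EX\wrt_{i+1}$ from $d_i<d_{i+1}$, and $\op_1\bef_{\hist{\EX}}\op_2\Rightarrow\op_1<_\EX\op_2$ from {\bf A1}.

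The step I expect to be the main obstacle is the inductive argument above that the versions which can legitimately be re-fed into $\act{cvr-write}$ are exactly those on the chain carved out by the consensus instances, so that the branching version ``tree'' of Definition~\ref{def:weak} collapses to a path. A smaller subtlety is a crash corner case: if the winner of $C_{ver_{i-1}}$ crashes between the decision and its response, that instance is decided but $ver_i\notin\verSet_\EX$, so the ``derivation from $ver_0$'' clause of Definition~\ref{def:validity} formally needs the completed write; I would dispatch this with the standard device of first completing the pending operations (Definition~\ref{def:atomic} in any case only speaks of executions in which every operation returns), after which the chain and the linearization above are unchanged. Liveness is then immediate from \emph{CTermination} of the oracle together with the finiteness of the decided chain at any point in $\EX$.
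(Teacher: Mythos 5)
Your proof is correct and follows essentially the same route as the paper's: strong coverability comes from \emph{CAgreement} on the per-version consensus instance forcing a unique winner per version, and atomicity comes from ordering operations along the chain of decided versions. You go further than the paper's (rather terse) argument by making the chain structure, the explicit linearization points $d_i$, and the crash corner case precise, but this is a refinement of the same idea rather than a different approach.
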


\begin{proof}
We show that the algorithm satisfies two properties: 
$(i)$ strong coverability and $(ii)$ atomicity. 

Strong coverability
requires that only a single write operation changes each 
version of the register. 
Let us assume to derive contradiction 
that there exists a version $ver$ of the object s.t. two operations
$\op_1 = \act{cvr-write}(v, ver)(v, ver_1, chg)$ and 
$\op_2 = \act{cvr-write}(v', ver)(v, ver_2, chg)$ both 
revise $ver$ leading to two potentially different versions 
$ver_1$ and $ver_2$. For this to be possible it means that 
$P.ver = ver_1$ for $\op_1$ and $P.ver = ver_2$ for $\op_2$.
$P$ however is the value decided by the consensus oracle. 
Since both $\op_1$ and $\op_2$ revise the same version $ver$ 
then they both invoked the consensus oracle on the same instance 
of the version $ver$. Since the consensus oracle reaches agreement 
on a single value then it must be the case that $P$ is the same 
for both $\op_1$ and $\op_2$, and hence $P.ver=ver_1=ver_2$. This
however contradicts our assumption. Thus, only a single write 
operation is able to modify each version and this preserves
strong coverability.  

Atomicity is trivially preserved by the write operations 
as they follow the total order imposed by the versions they change. 
Read operations are ordered in terms of the write operations 
since they invoke the consensus oracle until they reach the latest
version of the object. A read operation $\rd_1$ 
does not return an older 
value than a preceding read $\rd_2$, since $\rd_2$ would reach 
an earlier or at most the same version as $\rd_1$ before 
completing. Thus, $\rd_2$ 
will return the same or an older value as desired. Finally, 
a write operation that does not change the version of the 
register it must be ordered with respect to the rest of the 
read operations. Such write also discovers the latest accepted
version and thus, as before, it will 
return the same or a newer value than the one returned by a 
preceding read or unsuccessful write operation. 
\end{proof}


\newpage

\section{Supporting Large Files}
\label{appx:files}

Figure \ref{fig:ldr} depicts a modified version of the LDR algorithm
\cite{FL03}, that implements versioned large objects.

\begin{figure}[!h]
\hrule\vspace{0.15cm}
	\begin{footnotesize}		
			
		$\act{tr-write}(val, ver=maxtag)$
			\begin{itemize}[leftmargin=5mm]
				\item[] \act{get-metadata}: Send query request to \emph{directory servers} and wait for
				 $(tag,location)$ responses from a majority of them. Select the 
				$(tag,location)$ among the collected replies with the largest tag; 
				let $\tup{\tau,\srvSet}$ be this pair and the integer component of $\tau$ be $z$. Then:
				
				\item[] If $\tau \neq ver$	then do the following: 
				\begin{itemize}[leftmargin=5mm]
					\item[] \act{put-metadata}: Send $\tup{\tau,\srvSet}$ to the 
					\emph{directory servers} and wait for a majority of them to reply. Once those
					replies are received set $\tup{\tau_{new},\srvSet_{new}} = \tup{\tau, \srvSet}$.
					
					\item[] \act{get}: Send \emph{get object} request to $f+1$ \emph{replica servers}
					in $\srvSet$ for the $\tau$ version of the object and wait for a single server 
					to reply with $x$. Return $\tup{x, \tau, unchg}$. 
				\end{itemize}
				
				\item[] If $\tau = ver$ then do the following: 
				\begin{itemize}[leftmargin=5mm]
					\item[] \act{put}: 
					Create a new tag $\tau_{new} = \tup{z+1, wid}$ where $wid$ is the unique 
					identifier of the writer. Send $\tup{\tau_{new}, val}$ to $2f+1$ \emph{replica servers} 
					and wait for $f+1$ replies. Collect the identifiers of the servers that replied in 
					a set $\srvSet_{new}$.
					\item[] \act{put-metadata}: Send $\tup{\tau_{new},\srvSet_{new}}$ to all the \emph{directory servers}
					and wait for the majority of them to reply. Return $\tup{val, \tau_{new}, chg}$. 
				\end{itemize}				

			\end{itemize}
			
			$\act{tr-read}()$
			\begin{itemize}[leftmargin=5mm]
				\item[] \act{get-metadata}: Send query request to \emph{directory servers} and wait for
				$(tag,location)$ responses from a majority of them. Select the 
				$(tag,location)$ among the collected replies with the largest tag; 
				let $\tup{\tau,\srvSet}$ be this pair and the integer component of $\tau$ be $z$.
				
				\item[] \act{put-metadata}: Send $\tup{\tau,\srvSet}$ to the 
				\emph{directory servers} and wait for a majority of them to reply
					
				\item[] \act{get}: Send \emph{get object} request to $f+1$ \emph{replica servers}
				in $\srvSet$ for the $\tau$ version of the object and wait for a single server 
				to reply with $x$. Return $\tup{x, \tau}$. 					
			\end{itemize}
			
		\act{directory-server}
			\begin{itemize}[leftmargin=5mm]
				\item[] On receipt of \act{get-metadata} message: Send the tag-locations 
				pair $\tup{\tau_{s},\srvSet}$ stored locally.
				
				\item[] On receipt of \act{put-metadata} message: Let $\tup{\tau_m,\srvSet_m}$ be the tag-location
				pair enclosed in the received message and $\tup{\tau_s,\srvSet}$ the local pair on the server. 
				Compare the tags $\tau_m$ and $\tau_{s}$. If $\tau_{m} > \tau_{s}$ and $|\srvSet_m|\geq f+1$ then 
				store $\tup{\tau_m,\srvSet_m}$ locally.
			\end{itemize}
			
		\act{replica-server}
			\begin{itemize}[leftmargin=5mm]
				\item[] On receipt of \act{put} message: Add the $\tup{\tau_m,value}$ pair enclosed in the message to 
				the local set of available pairs and send an acknowledgement. 
				
				\item[] On receipt of \act{get} message: If the value associated with the requested tag is in the 
				set of pairs stored locally, respond with the value. Otherwise ignore the message.
			\end{itemize}
			
		\end{footnotesize}
		\hrule\vspace{.4em}
\caption{Operations of the modified LDR algorithm}
\label{fig:ldr}
\end{figure}

\end{document}